\newcommand{\etal}{{et al.}}
\newcommand{\dist}[2]{|#1#2|}
\newcommand{\changed}[1]{{\color{black} #1}}
\title{Faster Algorithms for some Optimization Problems\\ on Collinear Points
}
\author{Ahmad Biniaz\thanks{Cheriton School of Computer Science, University of Waterloo, 
		\newline\indent\indent  ahmad.biniaz@gmail.com, imunro@uwaterloo.ca}
	\and Prosenjit Bose\thanks{School of Computer Science, Carleton University, \{jit, anil, michiel\}@scs.carleton.ca}  
	\and Paz Carmi\thanks{Department of Computer Science, Ben-Gurion University of the Negev, carmip@cs.bgu.ac.il} 
	\and  Anil Maheshwari\footnotemark[2]
	\and  J. Ian Munro\footnotemark[1]
	\and  Michiel Smid\footnotemark[2]}
\date{\today}
\newtheorem{lemma}{Lemma}
\newtheorem{theorem}{Theorem}
\newtheorem{observation}{Observation}
\newtheorem*{problem*}{Problem}
\begin{document}
\maketitle
\begin{abstract}
We propose faster algorithms for the following three optimization problems on $n$ collinear points, i.e., points in dimension one. The first two problems are known to be NP-hard in higher dimensions.

\begin{enumerate}
	\item {\em Maximizing total area of disjoint disks}: In this problem the goal is to maximize the total area of nonoverlapping disks centered at the points. Acharyya, De, and Nandy (2017) presented an $O(n^2)$-time algorithm for this problem. We present an optimal $\Theta(n)$-time algorithm.   
	\item {\em Minimizing sum of the radii of client-server coverage}: The $n$ points are partitioned into two sets, namely clients and servers. The goal is to minimize the sum of the radii of disks centered at servers such that every client is in some disk, i.e., in the coverage range of some server. Lev-Tov and Peleg (2005) presented an $O(n^3)$-time algorithm for this problem. We present an $O(n^2)$-time algorithm, thereby improving the running time by a factor of $\Theta(n)$. 
	\item {\em Minimizing total area of point-interval coverage}: The $n$ input points belong to an interval $I$. The goal is to find a set of \changed{$n$} disks of minimum total area, covering $I$, such that every disk contains at least one input point. We present an algorithm that solves this problem in $O(n^2)$ time.
\end{enumerate}
\end{abstract}

\section{Introduction}
\label{introduction-section}
Range assignment is a well-studied class of geometric optimization problems that arises in wireless network design, and has a rich literature. The task is to assign transmission ranges to a set of given base station antennas such that the resulting network satisfies a given property. The antennas are usually represented by points in the plane. The coverage region of an antenna is usually represented by a disk whose center is the antenna and whose radius is the transmission range assigned to that antenna. In this model, a range assignment problem can be interpreted as the following problem. Given a set of points in the plane, we must choose a radius for each point, so that the disks with these radii satisfy a given property. 

Let $P=\{p_1,\dots,p_n\}$ be a set of $n$ points in the $d$-dimensional Euclidean space. A {\em range assignment} for $P$ is an assignment of a transmission range $r_i\geqslant 0$ (radius) to each point $p_i\in P$.
The cost of a range assignment, representing the power consumption of the network, is defined as
$C=\sum_i{r_i^\alpha}$ for some constant $\alpha\geqslant 1$.
We study the following three range assignment problems on a set of points on a straight-line ($1$-dimensional Euclidean space).

\begin{description}
	\item[Problem 1] Given a set of collinear points, maximize the
	total area of nonoverlapping disks centered at these points. The nonoverlapping constraint requires $r_i+r_{i+1}$ to be no larger than the Euclidean distance between $p_i$ and $p_{i+1}$, for every $i\in\{1,\dots,n-1\}$.
	\vspace{4pt}
	\item[Problem 2] Given a set of collinear points that is partitioned into
	two sets, namely clients and servers, the goal is to minimize the sum of the radii of disks centered at the servers such that every client is in some disk, i.e., every client is covered by at least one server.
	\vspace{4pt}
	\item[Problem 3] \changed{Given a set $\{p_1,\dots, p_n\}$ of $n$ points on an interval, find a set $D_1,\dots,D_n$ of $n$ disks covering the entire interval such that the total area of disks is minimized and for every $i$ the disk $D_i$ contains the point $p_i$.}
\end{description}

In Problem 1 we want to maximize $\sum r_i^2$, in Problem 2 we want to minimize $\sum r_i$, and in Problem 3 we want to minimize $\sum r_i^2$. These three problems are solvable in polynomial time in 1-dimension. Both Problem 1 and Problem 2 are NP-hard in dimension $d$, for every $d \geqslant 2$, and both have a PTAS \cite{Acharyya2017b, Alt2006, Bilo2005}. 

Acharyya~\etal~\cite{Acharyya2017b} showed that Problem 1 can be solved in $O(n^2)$ time. Eppstein~\cite{Eppstein2016} proved that an alternate version of this problem, where the goal is to maximize the sum of the radii, can be solved in $O(n^{2-1/d})$ time for any constant dimension $d$.
Bil\`{o}  \etal~\cite{Bilo2005} showed that Problem 2 is solvable in polynomial time by reducing it to an integer linear program with a totally unimodular constraint matrix. Lev-Tov and Peleg \cite{Lev-Tov2005} presented an $O(n^3)$-time algorithm for this problem. They also presented a linear-time 4-approximation algorithm. Alt~\etal~\cite{Alt2006} improved the ratio of this linear-time algorithm to 3. They also presented an $O(n\log n)$-time 2-approximation algorithm for Problem 2. Chambers~\etal~\cite{Chambers2011} studied a variant of Problem 3\textemdash on collinear points\textemdash where the disks centered at input points; they showed that the best solution with two disks gives a $5/4$-approximation.
Carmi~\etal~\cite{Carmi2006} studied a similar version of the problem for points in the plane. 

\subsection{Our Contributions}
In this paper we study Problems 1-3. In Section~\ref{Problem-1-section}, we present an algorithm that solves Problem 1 in linear time, provided that the points are given in sorted order along the line. This improves the previous best running time by a factor of $\Theta(n)$. In Section~\ref{Problem-2-section}, we present an algorithm that solves Problem 2 in $O(n^2)$ time; this also improves the previous best running time by a factor of $\Theta(n)$. In Section~\ref{Problem-3-section}, first we present a simple $O(n^3)$ algorithm for Problem 3. Then with a more involved proof, we show how to improve the running time to $O(n^2)$.

\section{Problem 1: Disjoint Disks with Maximum Area}
\label{Problem-1-section}
In this section we study Problem 1: Let $P=\{p_1,\dots,p_n\}$ be a set of $n\geqslant 3$ points on a straight-line $\ell$ that are given in sorted order. We want to assign to every $p_i\in P$ a radius $r_i$ such that the disks with the given radii do not overlap and their total area, or equivalently $\sum r_i^2$, is as large as possible. Acharyya \etal~\cite{Acharyya2017a} showed how to obtain such an assignment in $O(n^2)$ time. We show how to obtain such an assignment in linear time. 

\begin{theorem}
	Given $n$ collinear points in sorted order in the plane, in $\Theta(n)$ time, we can find a set of nonoverlapping disks centered at these points that maximizes the total area of the disks.
\end{theorem}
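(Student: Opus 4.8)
The plan is to recast the task as a constrained optimization problem and exploit convexity. Writing $d_i := \dist{p_i}{p_{i+1}}$ for $i \in \{1,\dots,n-1\}$, the non-overlapping condition is exactly $r_i + r_{i+1} \le d_i$, so we must maximize the objective $\sum_{i=1}^n r_i^2$ over the polytope $\mathcal P = \{\, r \in \IR^n : r_i \ge 0,\ r_i + r_{i+1} \le d_i \text{ for all } i \,\}$. Since the objective is a convex function and $\mathcal P$ is a bounded polytope, its maximum is attained at a vertex of $\mathcal P$, and the whole proof is an analysis of which vertex. First I would record the ``chain'' structure: each variable $r_i$ occurs in at most the two gap constraints indexed by $i-1$ and $i$, which is what makes a local exchange argument possible.

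The key structural claim is that there is an optimal assignment in which no two consecutive points both receive a positive radius, and I would prove it by an exchange argument driven by convexity. Consider a maximal block $p_j,\dots,p_k$ of consecutive points all carrying positive radius at an optimal vertex. For the vertex to be $0$-dimensional the radii in the block must be pinned down, which forces every internal gap constraint $r_m + r_{m+1} = d_m$ to be tight; hence all radii in the block are affine functions of the single parameter $r := r_j$, alternating as $r_{j+m} = c_m \pm r$. The block's contribution $\sum_m (c_m \pm r)^2$ is then a convex quadratic in $r$, so it is maximized at an endpoint of the feasible interval for $r$; at such an endpoint some $r_{j+m}$ drops to $0$, which breaks the block into strictly shorter ones without decreasing the objective. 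Iterating (or inducting on block length) yields an optimum whose positive radii are pairwise non-adjacent. \textbf{This block-splitting step is the main obstacle}, because one must check that the binding endpoint of the feasible interval is always forced by a non-negativity constraint rather than by one of the two boundary gaps flanking the block; here the feasibility relation $d_m \le d_{m-1} + d_{m+1}$ (forced by the caps coming from the zero neighbours) is what rules out the bad case.

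Once the structure is in hand the problem collapses to a purely local one. If the set $S$ of positive points is independent, each $r_i$ with $i \in S$ is constrained only through its zero neighbours, so feasibility reduces to $r_i \le \min(d_{i-1}, d_i)$ for an interior point (and $r_1 \le d_1$, $r_n \le d_{n-1}$ at the ends), and since the objective strictly increases in each radius we set $r_i$ to this upper bound. Assigning weight $w_i := \min(d_{i-1}, d_i)^2$ to each interior point, $w_1 := d_1^2$ and $w_n := d_{n-1}^2$, the optimum value equals the maximum weight of an independent set of the path $p_1 - p_2 - \cdots - p_n$: every isolated feasible assignment attains $\sum_{i\in S} w_i$, and the structural claim guarantees an isolated optimum. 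This maximum-weight independent set on a path is computed by the standard linear recurrence $A_i = \max(A_{i-1},\, A_{i-2} + w_i)$, after which I would trace the recurrence back to recover $S$ and emit the radii. Computing the gaps and weights from the sorted input, running the recurrence, and writing the $n$ radii each take $O(n)$ time; since any algorithm must at least read the input and output $n$ radii, $\Omega(n)$ is a matching lower bound, giving the claimed $\Theta(n)$ running time.
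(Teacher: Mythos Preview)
Your structural claim---that some optimum has no two consecutive positive radii---is false, and with it the whole reduction to maximum-weight independent set on a path collapses. Take six points with gap sequence $(d_1,\dots,d_5)=(2,1,10,1,2)$. Your weights are $w=(4,1,1,1,1,4)$, and the best independent set on the path has weight $9$ (e.g.\ $\{1,3,6\}$ or $\{1,4,6\}$). But the assignment $r=(2,0,1,1,0,2)$ is feasible (every gap constraint is satisfied, in particular $r_3+r_4=2\le 10$) and achieves $\sum r_i^2=10>9$; in fact $10$ is the true optimum here. So your algorithm outputs a strictly sub-optimal solution.

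The error is in the ``block-splitting'' step you flagged as the main obstacle. You assert that in a maximal positive block every \emph{internal} gap constraint must be tight, and then slide a single parameter. In the counterexample the block $\{3,4\}$ is pinned down not by the internal constraint $r_3+r_4\le d_3$ (which is slack) but by the two \emph{boundary} caps $r_3\le d_2$ and $r_4\le d_4$ coming from the zero neighbours, so there is no free parameter to slide and no way to split the block without loss. Your proposed rescue, the ``feasibility relation $d_m\le d_{m-1}+d_{m+1}$ forced by the caps,'' is not a valid inequality on arbitrary point sets (here $d_3=10>1+1=d_2+d_4$), so it cannot rule out this case. The paper's linear-time algorithm takes a different route: it shows that every disk in some optimal solution is either full, zero, or one of $O(n)$ ``cascading'' partial disks that can be enumerated from signature sequences, and then solves a weighted interval scheduling instance rather than an independent-set instance on the path.
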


With a suitable rotation we assume that $\ell$ is horizontal. Moreover, we assume that $p_1,\dots,p_n$ is the sequence of points of $P$ in increasing order of their $x$-coordinates. We refer to a set of nonoverlapping disks centered at points of $P$ as a {\em feasible solution}. We refer to the disks in a feasible solution $S$ that are centered at $p_1,\dots, p_n$ as $D_1, \dots, D_n$, respectively. Also, we denote the radius of $D_i$ by $r_i$; it might be that $r_i=0$. For a feasible solution $S$ we define $\alpha(S)=\sum r_i^2$. Since the total area of the disks in $S$ is $\pi\cdot\alpha(S)$, hereafter, we refer to $\alpha(S)$ as the total area of disks in $S$. We call $D_i$ a {\em full disk} if it has $p_{i-1}$ or $p_{i+1}$ on its boundary, a {\em zero disk} if its radius is zero, and a {\em partial disk} otherwise. For two points $p_i$ and $p_j$, we denote the Euclidean distance between $p_i$ and $p_j$ by $\dist{p_i}{p_j}$.

We briefly review the $O(n^2)$-time algorithm of Acharyya~\etal~\cite{Acharyya2017a}. First, compute a set $\mathcal{D}$ of disks centered at points of $P$, which is the superset of every optimal solution. For every disk $D\in \mathcal{D}$, that is centered at a point $p\in P$, define a weighted interval $I$ whose length is $2r$, where $r$ is the radius of $D$, and whose center is $p$. Set the weight of $I$ to be $r^2$. Let $\mathcal{I}$ be the set of these intervals. The disks corresponding to the intervals in a maximum weight independent set of the intervals in $\mathcal{I}$ forms an optimal solution to Problem 1. By construction, these disks are nonoverlapping, centered at $p_1,\dots, p_n$, and maximize the total area. Since the maximum weight independent set of $m$ intervals that are given in sorted order of their left endpoints can be computed in $O(m)$ time \cite{Hsiao1992}, the time complexity of the above algorithm is essentially dominated by the size of $\mathcal{D}$. Acharyya~\etal~\cite{Acharyya2017a} showed how to compute such a set $\mathcal{D}$ of size $\Theta(n^2)$ and order the corresponding intervals in $O(n^2)$ time. Therefore, the total running time of their algorithm is $O(n^2)$. 

We show how to improve the running time to $O(n)$. In fact we show how to find a set $\mathcal{D}$ of size $\Theta(n)$ and order the corresponding intervals in $O(n)$ time, provided that the points of $P$ are given in sorted order.  

\subsection{Computation of $\mathcal{D}$}
In this section we show how to compute a set $\mathcal{D}$ with a linear number of disks such that every disk in an optimal solution for Problem 1 belongs to $\mathcal{D}$.

Our set $\mathcal{D}$ is the union of three sets $F$, $\overrightarrow{D}$, and $\overleftarrow{D}$ of disks that are computed as follows. The set $F$ contains $2n$ disks representing the full disks and zero disks that are centered at points of $P$. We compute $\overrightarrow{D}$ by traversing the points of $P$ from left to right as follows; the computation of $\overleftarrow{D}$ is symmetric. For each point $p_i$ with $i\in\{2,\dots,n-1\}$ we define its {\em signature} $s(p_i)$ as 
$$
s(p_i) =
\begin{cases}
+       & \quad \text{if~~} \dist{p_{i-1}}{p_i}\leqslant \dist{p_i}{p_{i+1}}\\
-  & \quad \text{if~~} \dist{p_{i-1}}{p_i}> \dist{p_i}{p_{i+1}}.\\
\end{cases}
$$
Set $s(p_1)=-$ and $s(p_n)=+$.
We refer to the sequence $\mathcal{S}=s(p_1),\dots,s(p_{n})$ as the {\em signature sequence} of $P$. Let $\Delta$ be the multiset that contains all contiguous subsequences $s(p_i),\dots,s(p_j)$ of $\mathcal{S}$, with $i<j$, such that $s(p_i)=s(p_j)=-$, and $s(p_k)=+$ for all $i< k< j$; if $j=i+1$, then there is no $k$. For example, if $\mathcal{S}=-++-+++---+--++$, then $\Delta=\{-++-, -+++-,--, --, -+-, --\}$. Observe that for every sequence $s(p_i),\dots,s(p_j)$ in $\Delta$ we have that 
$$\dist{p_i}{p_{i+1}}\leqslant \dist{p_{i+1}}{p_{i+2}}\leqslant \dist{p_{i+2}}{p_{i+3}}\leqslant\dots\leqslant \dist{p_{j-1}}{p_j},\quad\text{and}\quad \dist{p_{j-1}}{p_j}>\dist{p_j}{p_{j+1}}.$$

Every plus sign in $\mathcal{S}$ belongs to at most one sequence in $\Delta$, and every minus sign in $\mathcal{S}$ belongs to at most two sequences in $\Delta$. Therefore, the size of $\Delta$ (the total length of its sequences) is at most $2n$. For each sequence $s(p_i),\dots,s(p_j)$ in $\Delta$ we add some disks to $\overrightarrow{D}$ as follows. Consider the full disk $D_j$ at $p_j$. Iterate on $k=j-1,j-2,\dots, i$. In each iteration, consider the disk $D_k$ that is centered at $p_k$ and touches $D_{k+1}$. If $D_{k}$ does not contain $p_{k-1}$ and its area is not larger than the area of $D_{k+1}$, then add $D_k$ to $\overrightarrow{D}$ and proceed to the next iteration, otherwise, terminate the iteration. See Figure~\ref{increasing-sequence-fig}. This finishes the computation of $\overrightarrow{D}$. Notice that $\overrightarrow{D}$ contains at most $n-1$ disks. The computation of $\overleftarrow{D}$ is symmetric; it is done in a similar way by traversing the points from right to left (all the $+$ signatures become $-$ and vice versa).

\begin{figure}[htb]
	\centering
	\includegraphics[width=.9\columnwidth]{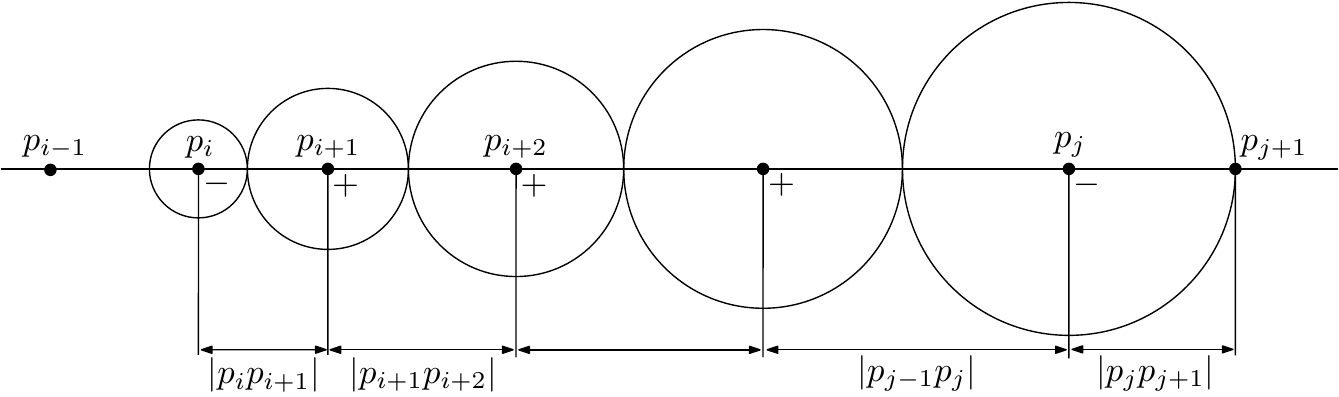}
	\caption{Illustration of a sequence $s(p_i),\dots, s(p_j)=-+++-$ in $\Delta$; construction of \text{$\protect\overrightarrow{D}$}.}
	\label{increasing-sequence-fig}
\end{figure}

The number of disks in $\mathcal{D}= F\cup\overrightarrow{D}\cup\overleftarrow{D}$ is at most $4n-2$. The signature sequence $\mathcal{S}$ can be computed in linear time. Having $\mathcal{S}$, we can compute the multiset $\Delta$, the disks in $\overrightarrow{D}$ as well as the corresponding intervals, as in \cite{Acharyya2017a} and described before, in sorted order of their left endpoints in total $O(n)$ time. Then the sorted intervals corresponding to circles in $\mathcal{D}$ can be computed in linear-time by merging the sorted intervals that correspond to sets $F$, $\overrightarrow{D}$, and $\overleftarrow{D}$. It remains to show that $\mathcal{D}$ contains an optimal solution for Problem 1. To that end, we first prove two lemmas about the structural properties of an optimal solution.

\begin{lemma}
	\label{first-last-points-lemma}
	Every feasible solution $S$ for Problem 1 can be converted to a feasible solution $S'$ where $D_1$ and $D_n$ are full disks and $\alpha(S')\geqslant \alpha(S)$.
\end{lemma}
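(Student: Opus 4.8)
The plan is to obtain $S'$ from $S$ by two independent local modifications, one at each end. First I would force $D_1$ to be full: set $r_1 := \dist{p_1}{p_2}$ and $r_2 := 0$, leaving every other radius unchanged. Since $p_1$ is the leftmost point, the only nonoverlapping constraint involving $r_1$ is $r_1 + r_2 \leqslant \dist{p_1}{p_2}$, which now holds with equality; and decreasing $r_2$ to $0$ only relaxes the constraint $r_2 + r_3 \leqslant \dist{p_2}{p_3}$. So feasibility is preserved, and $p_2$ now lies on the boundary of $D_1$, making $D_1$ full.

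The key step is to check that the total area does not drop. Only $r_1$ and $r_2$ change, so it suffices to compare $r_1^2 + r_2^2$ before the modification with $\dist{p_1}{p_2}^2$ after it. Since $r_1, r_2 \geqslant 0$ and $r_1 + r_2 \leqslant \dist{p_1}{p_2}$, I get $r_1^2 + r_2^2 \leqslant (r_1 + r_2)^2 \leqslant \dist{p_1}{p_2}^2$; hence $\alpha$ does not decrease. I would then apply the mirror-image modification at the right end to the resulting solution, setting $r_n := \dist{p_{n-1}}{p_n}$ and $r_{n-1} := 0$. The identical feasibility check and the identical inequality on the last two radii show this step likewise preserves feasibility, keeps $\alpha$ from decreasing, and makes $D_n$ full.

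The only thing requiring care is that the two modifications not interfere. For $n \geqslant 4$ they act on the disjoint index sets $\{1,2\}$ and $\{n-1,n\}$, so independence is immediate. The sole overlap is when $n = 3$, where both steps touch index $2$; but each sets $r_2 = 0$, and the right-end step never changes $r_1$, so $D_1$ remains full while $D_3$ becomes full. Thus the composed construction is well-defined for every $n \geqslant 3$ and yields a feasible $S'$ with $D_1$ and $D_n$ full and $\alpha(S') \geqslant \alpha(S)$.
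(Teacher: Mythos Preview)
Your proof is correct and follows essentially the same approach as the paper: make $D_1$ full and $D_2$ zero, check feasibility via the relaxed constraint with $D_3$, and use the inequality $r_1^2+r_2^2\leqslant(r_1+r_2)^2\leqslant\dist{p_1}{p_2}^2$; then argue symmetrically for $D_n$. You are in fact slightly more careful than the paper, which simply says the proof for $D_n$ is similar without explicitly noting that for $n=3$ the two modifications both touch index~$2$.
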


\begin{proof}
	Recall that $n\geqslant 3$. We prove this lemma for $D_1$; the proof for $D_n$ is similar. Since $S$ is a feasible solution, we have that $r_1 + r_2 \leqslant \dist{p_1}{p_2}$. Let $S'$ be a solution that is obtained from $S$ by making $D_1$ a full disk and $D_2$ a zero disk. Since we do not increase the radius of $D_2$, it does not overlap $D_3$, and thus, $S'$ is a feasible solution. In $S'$, the radius of $D_1$ is $\dist{p_1}{p_2}$, and we have that $r_1^2+r_2^2\leqslant (r_1+r_2)^2\leqslant \dist{p_1}{p_2}^2$. This implies that $\alpha(S')\geqslant \alpha(S)$.
\end{proof}

\begin{lemma}
	\label{enlarge-lemma}
	If $D_i$, with $1<i<n$, is a partial disk in an optimal solution, then $r_i<\max(r_{i-1},r_{i+1})$.
\end{lemma}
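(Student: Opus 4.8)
The plan is to argue by contradiction: assume that in an optimal solution $S$ the disk $D_i$ is partial and yet $r_i\geqslant\max(r_{i-1},r_{i+1})$, and then exhibit a feasible modification of $S$ whose total area is strictly larger, contradicting optimality. Throughout I write $a=\dist{p_{i-1}}{p_i}$ and $b=\dist{p_i}{p_{i+1}}$; since $D_i$ is partial we have $0<r_i<a$ and $0<r_i<b$. The first step is to observe that $D_i$ must touch at least one neighbouring disk, i.e.\ $r_{i-1}+r_i=a$ or $r_i+r_{i+1}=b$. Otherwise both constraints are strict, and enlarging $r_i$ by a small $\epsilon>0$ keeps $S$ feasible while increasing $\alpha(S)$ by $2r_i\epsilon+\epsilon^2>0$; note this step does not even use the hypothesis on the maximum.

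Next comes the easy case, where $D_i$ touches exactly one neighbour, say $r_i+r_{i+1}=b$ while $r_{i-1}+r_i<a$ (the opposite subcase is symmetric). Here I would transfer area across the tight contact: increase $r_i$ by $\epsilon$ and decrease $r_{i+1}$ by $\epsilon$. The right contact is preserved, the left constraint has slack and so tolerates small $\epsilon$, the value $r_{i+1}=b-r_i>0$ stays nonnegative, and the constraint toward $p_{i+2}$ only relaxes. The change in $\alpha(S)$ is $2\epsilon(r_i-r_{i+1})+2\epsilon^2>0$ because $r_i\geqslant r_{i+1}$, a contradiction.

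The main obstacle is the remaining case, in which $D_i$ touches \emph{both} neighbours, $r_{i-1}+r_i=a$ and $r_i+r_{i+1}=b$. Any infinitesimal enlargement of $r_i$ now forces both $r_{i-1}$ and $r_{i+1}$ to shrink, and if a neighbour in turn touches its own outer neighbour this perturbation propagates along a whole chain of tight contacts, so the naive local exchange breaks down. To sidestep the propagation I would instead use a single \emph{discrete} move that fills the smaller gap: assuming without loss of generality $a\leqslant b$, replace the triple $(r_{i-1},r_i,r_{i+1})$ by $(0,\,a,\,b-a)$ (if $a>b$, symmetrically use $(a-b,\,b,\,0)$). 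The key point of zeroing $D_{i-1}$ is that shrinking a disk can only relax the constraint toward $p_{i-2}$, so \emph{no other disk has to move}: the new radii are nonnegative, $D_i$ meets $D_{i+1}$ exactly, and the constraints toward $p_{i-2}$ and $p_{i+2}$ are only loosened, giving feasibility for free. It then remains to check that $\alpha(S)$ strictly increases, which reduces to showing that
\[
a^2+(b-a)^2-\bigl((a-r_i)^2+r_i^2+(b-r_i)^2\bigr)=-3r_i^2+2(a+b)r_i+a^2-2ab
\]
is positive. The hypothesis $r_i\geqslant\max(r_{i-1},r_{i+1})$ yields $r_i\geqslant b/2$, and since $r_i<a$ this forces $b<2a$; a short computation factors the right-hand quadratic with roots $\tfrac{2b-a}{3}$ and $a$, and $b<2a$ gives $b/2>\tfrac{2b-a}{3}$, so the whole range $[b/2,a)$ lies strictly between the roots and the expression is positive there.

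I expect most of the bookkeeping to sit in this last case, and the one genuinely clever ingredient to be the choice of a discrete ``fill the smaller gap'' move rather than an infinitesimal transfer: zeroing a neighbour decouples the modification from the rest of the touching chain, which is exactly what makes the two-sided case tractable without tracking a cascade of contacts.
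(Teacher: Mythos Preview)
Your proof is correct and follows essentially the same strategy as the paper: first dispose of the non-touching and one-sided cases by an $\epsilon$-transfer, then in the two-sided case zero out the neighbouring disk on the shorter side---your move $(0,a,b-a)$ is exactly the paper's move $(0,\,r_i+r_{i-1},\,r_{i+1}-r_{i-1})$ once one substitutes $r_{i-1}=a-r_i$ and $r_{i+1}=b-r_i$ (and $a\leqslant b$ is equivalent to $r_{i-1}\leqslant r_{i+1}$). The only difference is bookkeeping: the paper verifies the gain directly as $r_{i-1}^2+2r_{i-1}(r_i-r_{i+1})>0$, which is shorter than your quadratic-root analysis in the variables $a,b,r_i$.
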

\begin{proof}
	The proof is by contradiction; let $S$ be such an optimal solution for which $r_i\geqslant \max(r_{i-1},\allowbreak r_{i+1})$. First assume that $D_i$ touches at most one of $D_{i-1}$ and $D_{i+1}$. By slightly enlarging $D_i$ and shrinking its touching neighbor we can increase the total area of $S$. Without loss of generality suppose that $D_i$ touches $D_{i-1}$. Since $r_i\geqslant r_{i-1}$, $$(r_i+\epsilon)^2+(r_{i-1}-\epsilon)^2=r_i^2+r_{i-1}^2+2(r_i\epsilon-r_{i-1}\epsilon+\epsilon^2)>r_i^2+r_{i-1}^2>0,$$ for any $\epsilon>0$. This contradicts optimality of $S$. Now, assume that $D_i$ touches both $D_{i-1}$ and $D_{i+1}$, and that $r_{i-1}\leqslant r_{i+1}$. See Figure~\ref{enlarge-fig}. We obtain a solution $S'$ from $S$ by enlarging $D_i$ as much as possible, and simultaneously shrinking both $D_{i-1}$ and $D_{i+1}$. This makes $D_{i-1}$ a zero disk, $D_i$ a full disk, $D_{i+1}$ a zero or a partial disk, and does not change the other disks. The difference between the total areas of $S'$ and $S$ is 
	$$\left((r_i+r_{i-1})^2 + (r_{i+1}-r_{i-1})^2\right) -(r_{i-1}^2+r_i^2+r_{i+1}^2)=r_{i-1}^2+2r_{i-1}(r_i-r_{i+1})>0;$$
	this inequality is valid since $r_i\geqslant r_{i+1}\geqslant r_{i-1}>0$.
	This contradicts the optimality of $S$.
\end{proof}

\begin{figure}[htb]
	\centering
	\includegraphics[width=.8\columnwidth]{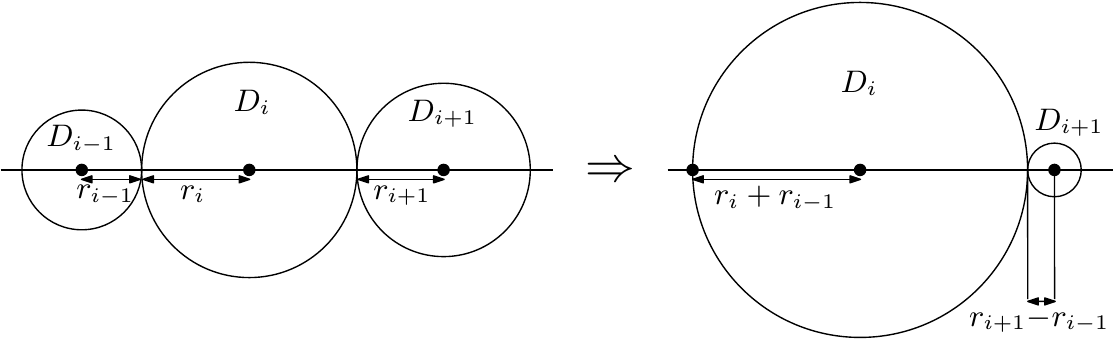}
	\caption{Illustration of the proof of Lemma~\ref{enlarge-lemma}.}
	\label{enlarge-fig}
\end{figure}

\begin{lemma}
	The set $\mathcal{D}$ contains an optimal solution for Problem 1.
\end{lemma}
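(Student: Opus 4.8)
The plan is to start from an arbitrary optimal solution $S$ and to show that, after the normalization of Lemma~\ref{first-last-points-lemma}, every disk of $S$ lies in $\mathcal{D}$. By Lemma~\ref{first-last-points-lemma} we may assume that $D_1$ and $D_n$ are full; together with all the zero disks these are already contained in $F$. Hence the entire task reduces to proving that every \emph{partial} disk of $S$ belongs to $\overrightarrow{D}\cup\overleftarrow{D}$.

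First I would establish two local properties of a partial disk $D_i$ (with $1<i<n$) in $S$. Since enlarging a disk that touches neither neighbor strictly increases $\alpha(S)$, each partial disk must touch at least one neighbor. Moreover a partial disk is strictly smaller than one of its touching neighbors: if it touches exactly one neighbor, say $D_{i-1}$, then the exchange that transfers radius from $D_{i-1}$ to $D_i$ (using the slack on the open side) forces $r_i<r_{i-1}$, and if it touches both neighbors this is exactly Lemma~\ref{enlarge-lemma}. Thus from every partial disk I can step to a strictly larger touching neighbor, and\textemdash because that larger neighbor cannot be the smaller disk I just came from\textemdash continue in a fixed direction. This produces a maximal chain of consecutively touching disks whose radii strictly increase; as the radii only grow, the chain cannot end at a zero disk, so it must terminate at a full disk $D_m$.

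Assume the chain increases to the right (the left case is symmetric and feeds $\overleftarrow{D}$). Since the last partial disk $D_{m-1}$ touches $D_m$ with $r_{m-1}>0$, the full disk $D_m$ cannot touch $p_{m-1}$, so it touches $p_{m+1}$ and $r_m=\dist{p_m}{p_{m+1}}$; in particular $m<n$. Writing each touching relation as $\dist{p_k}{p_{k+1}}=r_k+r_{k+1}$ and using that the radii increase shows that the gaps strictly increase along the chain while the gap to the right of $p_m$ is smaller, so the signatures are $+$ in the interior of the chain and $-$ at $p_m$. Hence $p_m$ is the right endpoint of a sequence in $\Delta$, which is exactly where the construction of $\overrightarrow{D}$ seeds its backward scan at the full disk touching $p_{m+1}$. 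The scan rebuilds disks through the recurrence $r_k=\dist{p_k}{p_{k+1}}-r_{k+1}$, which is identical to the touching relation in $S$; an induction from $D_m$ downward therefore shows that the reconstructed radii coincide with the optimal ones. Finally, for each chain index $k\in\{i,\dots,m-1\}$ the disk $D_k$ is partial, so it does not contain $p_{k-1}$, and $r_k<r_{k+1}$, so its area does not exceed that of its successor; thus neither stopping rule of the construction fires and every $D_k$ is added to $\overrightarrow{D}$.

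The main obstacle is this last matching step: I must verify that the chain extracted from $S$ sits inside a single $\Delta$-sequence and that its terminal full disk is precisely the one from which $\overrightarrow{D}$ starts (touching the outer neighbor, and never at $p_n$), and then check that the two termination conditions of the scan are never triggered along the chain. All of this rests on the monotonicity of the radii and on the partial/full bookkeeping of the signatures; once the reconstruction is shown to reproduce the optimal radii step by step, combining the three sets yields that $\mathcal{D}$ contains the optimal solution $S$.
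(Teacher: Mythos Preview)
Your proposal is correct and follows essentially the same approach as the paper: normalize so $D_1,D_n$ are full, place full/zero disks in $F$, and for each partial disk walk along a touching chain of strictly increasing radii to a full disk $D_m$ (with $s(p_m)=-$ and interior signatures $+$), then argue that the backward construction of $\overrightarrow{D}$ from that $\Delta$-sequence reproduces the optimal radii and its two stopping tests never fire. The only cosmetic difference is that you phrase the argument as building one monotone chain, whereas the paper unrolls it step by step (checking at each stage whether the next disk is full or partial and invoking Lemma~\ref{enlarge-lemma}); the content is the same.
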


\begin{proof}
	It suffices to show that every disk $D_k$, which is centered at $p_k$, in an optimal solution $S=\{D_1,\dots,D_n\}$ belongs to $\mathcal{D}$. By Lemma~\ref{first-last-points-lemma}, we may assume that both $D_1$ and $D_n$ are full disks. If $D_k$ is a full disk or a zero disk, then it belongs to $F$. Assume that $D_k$ is a partial disk. Since $S$ is optimal, $D_k$ touches at least one of $D_{k-1}$ and $D_{k+1}$, because otherwise we could enlarge $D_k$. 
	
	First assume that $D_k$ touches exactly one disk, say $D_{k+1}$. We are going to show that $D_k$ belongs to $\overrightarrow{D}$ (If $D_k$ touches only $D_{k-1}$, by a similar reasoning we can show that $D_k$ belongs to $\overleftarrow{D}$). Notice that $r_k<r_{k+1}$, because otherwise we could enlarge $D_k$ and shrink $D_{k+1}$ simultaneously to increase $\alpha(S)$, which contradicts the optimality of $S$. Since $D_k$ is partial and touches $D_{k+1}$, we have that $D_{k+1}$ is either full or partial. If $D_{k+1}$ is full, then it has $p_{k+2}$ on its boundary, and thus $s(p_{k+1})=-$. By our definition of $\Delta$, for some $i<k+1$, the sequence $s(p_i),\dots, s(p_{k+1})$ belongs to $\Delta$. Then by our construction of $\overrightarrow{D}$ both $D_{k+1}$ and $D_k$ belong to $\overrightarrow{D}$, where $k+1$ plays the role of $j$. Assume that $D_{k+1}$ is partial. Then $D_{k+2}$ touches $D_{k+1}$, because otherwise we could enlarge $D_{k+1}$ and shrink $D_{k}$ simultaneously to increase $\alpha(S)$. Recall that $r_k<r_{k+1}$. Lemma~\ref{enlarge-lemma} implies that $r_{k+1}<r_{k+2}$. This implies that $\dist{p_k}{p_{k+1}}<\dist{p_{k+1}}{p_{k+}}$, and thus $s(p_{k+1})=+$. Since $D_{k+1}$ is partial and touches $D_{k+2}$, we have that $D_{k+2}$ is either full or partial. If $D_{k+2}$ is full, then it has $p_{k+3}$ on its boundary, and thus $s(p_{k+2})=-$. By a similar reasoning as for $D_{k+1}$ based on the definition of $\Delta$ and $\overrightarrow{D}$, we get that $D_{k+2}$, $D_{k+1}$, and $D_k$ are in $\overrightarrow{D}$. If $D_{k+2}$ is partial, then it touches $D_{k+3}$ and again by Lemma~\ref{enlarge-lemma} we have $r_{k+2}<r_{k+3}$ and consequently $s(p_{k+2})=+$. By repeating this process, we stop at some point $p_j$, with $j\leqslant n-2$, for which $D_j$ is a full disk, $r_{j-1}<r_{j}$, and $s(p_{j})=-$; notice that such a $j$ exists because $D_n$ is a full disk and consequently $D_{n-1}$ is a zero disk. To this end we have that $s(p_k)\in\{+,-\}$, $s(p_j)=-$, and $s(p_{k+1}),\dots, s(p_{j-1})$ is a plus sequence. Thus, $s(p_k),\dots,s(p_j)$ is a subsequence of some sequence $s(p_i),\dots,s(p_j)$ in $\Delta$. Our construction of $\overrightarrow{D}$ implies that all disks $D_k, \dots, D_{j}$ belong to $\overrightarrow{D}$.

	Now assume that $D_k$ touches both $D_{k-1}$ and $D_{k+1}$. By Lemma~\ref{enlarge-lemma} we have that $D_k$ is strictly smaller than the largest of these disks, say $D_{k+1}$. By a similar reasoning as in the previous case we get that $D_k\in \overrightarrow{D}$. 
\end{proof}

\section{Problem 2: Client-Server Coverage with Minimum Radii}
\label{Problem-2-section}

In this section we study Problem 2: Let $P=\{p_1,\dots,p_n\}$ be a set of $n$ points on a straight-line $\ell$ that is partitioned into two sets, namely clients and servers. We want to assign to every server in $P$ a radius such that the disks with these radii cover all clients and the sum of their radii is as small as possible. Bil\`{o} \etal~\cite{Bilo2005} showed that this problem can be solved in polynomial time. Lev-Tov and Peleg \cite{Lev-Tov2005} showed how to obtain such an assignment in $O(n^3)$ time. Alt \etal~\cite{Alt2006} presented an O(n log n)-time 2-approximation algorithm for this problem. We show how to solve this problem optimally in $O(n^2)$ time. 

\begin{theorem}
	Given a total of $n$ collinear clients and servers, in $O(n^2)$ time, we can find a set of disks centered at servers that cover all clients and where the sum of the radii of the disks is minimum.
\end{theorem}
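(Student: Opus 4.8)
The plan is to reduce Problem 2 to a one-dimensional interval-covering problem and to solve it by a dynamic program over the clients, accelerated by a monotone two-pointer sweep. Because the points are collinear, a server located at position $x$ with radius $\rho$ covers exactly the clients lying in $[x-\rho,x+\rho]$, and its contribution to the objective is just $\rho$. First I would sort the clients as $c_1<\dots<c_m$ (the points that must be covered, $m\leqslant n$) and keep the server positions in a separate sorted array; both sorts cost $O(n\log n)=O(n^2)$, so they are free within the target bound.

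Next I would establish two structural facts about optimal solutions. \emph{(i) No redundancy:} in an optimal solution no covering interval is properly contained in another, since if $I_a\subseteq I_b$ then every client covered by $I_a$ is also covered by $I_b$, so we may set the radius of server $a$ to $0$ without losing feasibility and without increasing the cost. Consequently the positive-radius covering intervals, sorted by left endpoint, have simultaneously increasing left endpoints, right endpoints, and centers. \emph{(ii) Partition form:} I claim that the optimum equals the minimum, over all partitions of $c_1,\dots,c_m$ into contiguous blocks, of the sum of block costs, where $\mathrm{cost}(j,i)$ denotes the least radius of a single server whose disk covers the whole block $c_j,\dots,c_i$. The ``$\leqslant$'' direction follows by taking an optimal solution with non-nested intervals $I_1,\dots,I_t$ sorted left to right, letting $b_k$ be the index of the rightmost client covered by $I_k$ (so $0=b_0\leqslant b_1\leqslant\cdots\leqslant b_t=m$), and assigning block $k$ to be the clients with indices in $(b_{k-1},b_k]$; non-nestedness and the absence of coverage gaps guarantee that every client of block $k$ lies inside $I_k$, so $\mathrm{cost}(\text{block}_k)\leqslant\rho_k$ and summing gives a partition of cost at most the optimum. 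The ``$\geqslant$'' direction is immediate, since any such partition yields a feasible set of disks. This justifies the dynamic program
$$\mathrm{OPT}(0)=0,\qquad \mathrm{OPT}(i)=\min_{1\leqslant j\leqslant i}\bigl(\mathrm{OPT}(j-1)+\mathrm{cost}(j,i)\bigr),$$
processed for $i=1,\dots,m$, with the answer $\mathrm{OPT}(m)$.

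For the block cost, covering both endpoints of $c_j,\dots,c_i$ with a server at position $x_s$ forces radius $\max(x_s-c_j,\,c_i-x_s)$, so
$$\mathrm{cost}(j,i)=\min_{s}\ \max(x_s-c_j,\; c_i-x_s).$$
As a function of the real variable $x$, the quantity $\max(x-c_j,\,c_i-x)$ is V-shaped with minimum at the midpoint $\mu_{ji}=(c_j+c_i)/2$; hence over the discrete server positions the minimizer is the server closest to $\mu_{ji}$, and $\mathrm{cost}(j,i)=\min\bigl(c_i-x_{\mathrm{pred}},\,x_{\mathrm{succ}}-c_j\bigr)$, where $x_{\mathrm{pred}}$ is the largest server position $\leqslant\mu_{ji}$ and $x_{\mathrm{succ}}$ the smallest one $\geqslant\mu_{ji}$ (dropping a term if the corresponding neighbor is absent).

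The main obstacle is efficiency: the dynamic program has $\Theta(m^2)$ transitions, and evaluating each $\mathrm{cost}(j,i)$ by an independent predecessor/successor search among the servers would cost an extra logarithmic or even linear factor, only giving $O(n^2\log n)$ or $O(n^3)$. I would remove this overhead by exploiting monotonicity: fix the outer index $j$ and let $i$ increase from $j$ to $m$; then $\mu_{ji}=(c_j+c_i)/2$ is nondecreasing, so the pointers to $x_{\mathrm{pred}}$ and $x_{\mathrm{succ}}$ move only to the right. Thus for each fixed $j$ the entire inner sweep advances the two pointers over the servers in $O(m+n)$ amortized time while delivering each $\mathrm{cost}(j,i)$ in $O(1)$ and performing the corresponding relaxation of $\mathrm{OPT}(i)$; processing the outer loop in increasing order of $j$ guarantees that $\mathrm{OPT}(j-1)$ is already finalized when it is used. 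Summing over $j$ gives total time $O\!\bigl(m(m+n)\bigr)=O(n^2)$, which establishes the theorem. A brief check of degenerate cases (a client coinciding with a server contributes a zero-cost block, and the problem is infeasible only if there are clients but no servers) completes the argument.
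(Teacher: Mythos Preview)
Your argument is correct. It is close in spirit to the paper's proof but organized differently enough to be worth a short comparison.

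Both proofs set up a prefix dynamic program with $O(n^2)$ transitions, each computed in $O(1)$ amortized time via a ``closest server to the midpoint'' observation. The paper indexes its table by all $n$ points and, for each prefix, isolates the single disk $D^*$ covering the rightmost client; its key structural fact is Lemma~\ref{server-in-disk-lemma} (no disk contains the center of another non-zero disk), from which it deduces that $D^*$ is unique and must lie in an explicit $O(k)$-size candidate set $\mathcal{D}_k$ built from the pairs (server, rightmost client) and (first server past a midpoint, other client). You instead index only by clients and prove the cleaner equivalence ``optimum $=$ best contiguous partition of the client sequence into single-disk blocks''; your structural fact is merely non-nestedness, which is weaker and more elementary than the center-containment lemma yet suffices for the block decomposition. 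Your outer-$j$/inner-$i$ two-pointer sweep is the mirror image of the paper's per-subproblem enumeration of $\mathcal{D}_k$ and yields the same $O(n^2)$ bound. In short: the paper's route goes through a sharper geometric lemma and an explicit candidate-disk set, while yours gets there with a lighter structural step and a standard interval-partition DP; both land on the same midpoint idea for the $O(1)$ transition cost.
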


Without loss of generality assume that $\ell$ is horizontal, and that $p_1,\dots,p_n$ is the sequence of points of $P$ in increasing order of their $x$-coordinates. We refer to a disk with radius zero as a {\em zero disk}, to a set of disks centered at servers and covering all clients as a {\em feasible solution}, and to the sum of the radii of the disks in a feasible solution as its {\em cost}. We denote the radius of a disk $D$ by $r(D)$, and denote by $D(p,q)$ a disk that is centered at the point $p$ with the point $q$ on its boundary.

We describe a top-down dynamic programming algorithm that maintains a table $T$ with $n$ entries $T(1),\dots, T(n)$. Each table entry $T(k)$ represents the cost of an optimal solution for the subproblem that consists of points $p_1,\dots,p_k$. The optimal cost of the original problem will be stored in $T(n)$; the optimal solution itself can be recovered from $T$. In the rest of this section we show how to solve a subproblem $p_1,\dots,p_k$. In fact, we show how to compute $T(k)$ recursively by a top-down dynamic programming algorithm. 
To that end, we first describe our three base cases:
\begin{itemize}
	\item There is no client. In this case $T(k)=0$.
	\item There are some clients but no server. In this case $T(k)=+\infty$.
	\item There are some clients and exactly one server, say $s$. In this case $T(k)$ is the radius of the smallest disk that is centered at $s$ and covers all the clients.
\end{itemize}	

Assume that the subproblem $p_1,\dots, p_k$ has at least one client and at least two servers. We are going to derive a recursion for $T(k)$.

\begin{observation}
	\label{client-on-boundary-obs}
	Every disk in any optimal solution has a client on its boundary. 
\end{observation}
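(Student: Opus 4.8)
The plan is a straightforward local-improvement (shrinking) argument, carried out by contradiction. Suppose some optimal solution $S$ contains a disk $D$, centered at a server $s$ and of radius $r$, that has no client on its boundary. I will show that $S$ can be modified into a feasible solution of strictly smaller cost, contradicting optimality. The intuition is that the boundary of $D$ is exactly the set of points at distance $r$ from $s$; if no client sits there, then the radius $r$ is larger than necessary and can be decreased a little without uncovering any client for which $D$ was responsible.

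The key step is to quantify ``a little.'' Let $C_D$ be the set of clients covered by $D$, i.e.\ those at distance at most $r$ from $s$. Since the client set is finite and, by assumption, none of them lies at distance exactly $r$, the quantity $d^\ast=\max\{\,\dist{c}{s} : c\in C_D\,\}$ is strictly less than $r$. I would replace $D$ by the disk centered at $s$ of radius $d^\ast$ and leave every other disk of $S$ unchanged. Every client of $C_D$ is still within distance $d^\ast$ of $s$, so it remains covered; every client outside $C_D$ was covered by some other disk, which is untouched. Hence the new assignment is still feasible, yet its cost has dropped by $r-d^\ast>0$, the desired contradiction.

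Two degenerate situations need separate (but easy) treatment, and these are the only places where care is required. First, if $D$ covers no client at all ($C_D=\emptyset$), then $D$ is simply deleted, equivalently its radius is set to $0$: feasibility is preserved and the cost drops by $r$. Second, a zero disk has as its boundary only its center $s$; since the client and server sets are disjoint, $s$ is not a client, so such a disk covers nothing and is removed by the first case. Consequently we may assume at the outset that every disk of an optimal solution has positive radius, and the argument above then forces each of them to carry a client on its boundary.
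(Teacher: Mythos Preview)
The paper states this as an observation without proof; your shrinking argument is the standard justification and is correct for disks of positive radius.

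One small point on the degenerate case: your treatment of zero disks is slightly muddled. A zero disk already has radius $0$, so ``removing'' it in your first case does not lower the cost and gives no contradiction; the literal statement of the observation in fact fails for zero disks (their boundary is the single server point, which is not a client). What you are really doing in your last paragraph is reinterpreting the claim to be about disks of positive radius, and that is clearly the paper's intended reading\textemdash the observation is only ever invoked for the disk $D^*$ that covers the rightmost client $c$, which necessarily has positive radius. It would be cleaner to say this up front rather than route it through the $C_D=\emptyset$ case.
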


\begin{lemma}
	\label{server-in-disk-lemma}
	No disk contains the center of some other non-zero disk in an optimal solution. 
\end{lemma}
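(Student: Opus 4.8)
The plan is a local exchange argument: whenever one disk contains the center of another non-zero disk, I absorb the latter into the former without increasing the cost. Assume, for contradiction, that some optimal solution contains disks $D_1$ and $D_2$, centered at servers at positions $c_1$ and $c_2$ on $\ell$ (which I view as the real line), such that $D_2$ is non-zero and $D_1$ contains $c_2$. Put $r_1=r(D_1)$ and $r_2=r(D_2)>0$; then $D_1$ covers $[c_1-r_1,\,c_1+r_1]$, $D_2$ covers $[c_2-r_2,\,c_2+r_2]$, and the containment $c_2\in D_1$ reads $|c_1-c_2|\le r_1$. By the reflective symmetry of the line I may assume $c_1<c_2$, so that $c_2-c_1\le r_1$; in particular $D_1$ is itself non-zero.

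The exchange I would apply is uniform across all sub-cases: set $D_2$ to a zero disk and replace $D_1$ by the disk $D_1'$ centered at $c_1$ of radius $R:=\max\{r_1,\,(c_2-c_1)+r_2\}$. Two verifications are needed. For feasibility, I would check that $D_1'\supseteq D_1\cup D_2$: since $R\ge r_1$ it contains $D_1$, and since $R\ge (c_2-c_1)+r_2$ its right endpoint $c_1+R\ge c_2+r_2$ while its left endpoint $c_1-R\le c_2-r_2$ (using $c_1<c_2$), so it also contains $D_2$; hence every client remains covered. For the cost, the total radius changes by $R-r_1-r_2=\max\{-r_2,\,(c_2-c_1)-r_1\}$, and both quantities inside the maximum are $\le 0$ — the first because $r_2>0$, the second because $c_2-c_1\le r_1$ by containment. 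Thus the exchange never increases the cost.

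When $c_2$ lies strictly inside $D_1$, i.e.\ $c_2-c_1<r_1$, the second term is strictly negative, so the cost strictly decreases and optimality is contradicted; this disposes of every strict containment at once. The delicate case, which I expect to be the main obstacle, is the boundary tie $c_2-c_1=r_1$, where $D_1'$ merely matches the optimal cost. Here the right endpoint of $D_1$ is exactly $c_2$, which is a \emph{server} and hence not a client, so by Observation~\ref{client-on-boundary-obs} the client forced onto the boundary of $D_1$ sits at its left endpoint and $D_1$ cannot be shrunk — the tie is genuine rather than an artifact. I would break it by fixing at the outset an optimal solution that, among all optimal solutions, uses the fewest non-zero disks; the exchange above preserves optimality while turning the non-zero disk $D_2$ into a zero disk, strictly decreasing this count and contradicting the choice. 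Consequently no disk of the chosen optimal solution contains the center of another non-zero disk, which is what the lemma asserts.
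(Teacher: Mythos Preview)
Your exchange argument for the strict-interior case is exactly the paper's proof: absorb $D_2$ into a disk of radius $|c_1c_2|+r_2$ centered at $c_1$, observe this covers $D_1\cup D_2$, and conclude the cost drops strictly because $|c_1c_2|<r_1$. The paper in fact reads ``contains'' as \emph{strict} containment (its proof writes $r_i>|p_ip_j|$ and derives a strict decrease), so the boundary tie $c_2-c_1=r_1$ is not part of the lemma and your additional case analysis is superfluous.

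More importantly, the tie-breaking you introduce for the boundary case is not just unnecessary but actively incompatible with the paper. You fix an optimal solution with the \emph{fewest} non-zero disks; the paper, immediately after this lemma, fixes $S^*$ to be an optimal solution with the \emph{maximum} number of non-zero disks and applies the lemma to $S^*$. Your version of the lemma would hold only for your specially chosen optimum, not for $S^*$, so the downstream argument would no longer go through. (And indeed the closed-containment version of the statement is false for arbitrary optima: with servers at $0,1$ and clients at $-1,2$, both the assignment $(r_0,r_1)=(1,1)$ and $(2,0)$ are optimal, yet in the first the disk at $1$ contains the server $0$ on its boundary while that server carries a non-zero disk.) So either read ``contains'' as the paper does and drop the boundary discussion, or, if you insist on closed containment, weaken the conclusion to ``some optimal solution'' and make sure the tie-break you choose matches what the rest of the section needs.
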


\begin{proof}
	Our proof is by contradiction. Let $D_i$ and $D_j$ be two disks in an optimal solution such that $D_i$ contains the center of $D_j$. Let $p_i$ and $p_j$ be the centers of $D_i$ and $D_j$, respectively, and $r_i$ and $r_j$ be the radii of $D_i$ and $D_j$, respectively. See Figure~\ref{server-in-disk-fig}(a). Since $D_i$ contains $p_j$, we have $r_i>\dist{p_i}{p_j}$. Let $D'_i$ be the disk of radius $\dist{p_i}{p_j}+r_j$ that is centered at $p_i$. Notice that $D'_i$ covers all the clients that are covered by $D_i\cup D_j$. By replacing $D_i$ and $D_j$ with $D'_i$ we obtain a feasible solution whose cost is smaller than the optimal cost, because $\dist{p_i}{p_j}+r_j< r_i+r_j$. This contradicts the optimality of the initial solution.  
\end{proof}

\begin{figure}[htb]
	\centering
	\setlength{\tabcolsep}{0in}
	$\begin{tabular}{cc}
	\multicolumn{1}{m{.38\columnwidth}}{\centering\includegraphics[width=.33\columnwidth]{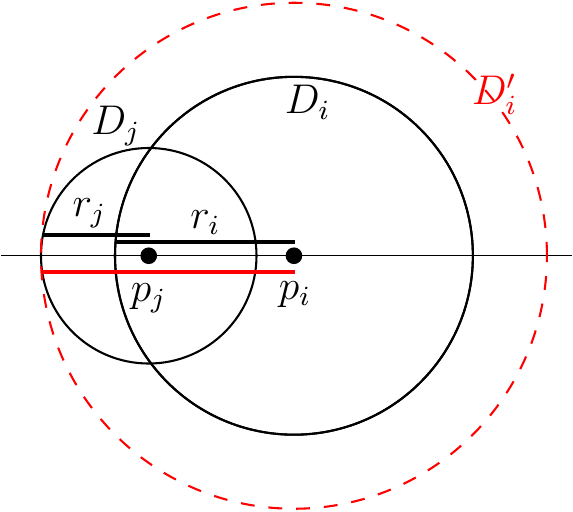}}
	&\multicolumn{1}{m{.62\columnwidth}}{\centering\includegraphics[width=.57\columnwidth]{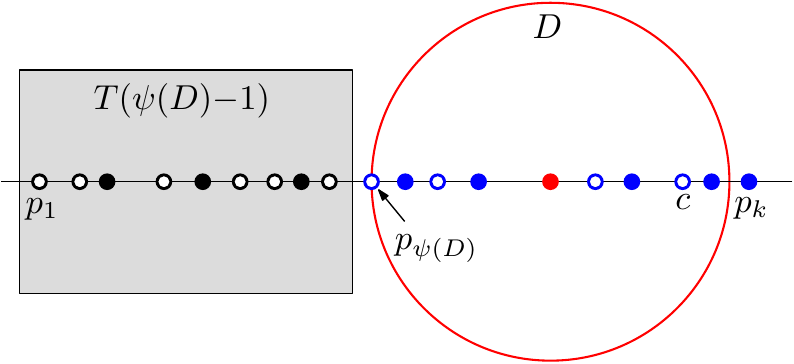}}
	\\
	(a) & (b)
	\end{tabular}$
	\caption{(a) Illustration of the proof of Lemma~\ref{server-in-disk-lemma}. (b) Clients are shown by small circles, and servers are shown by small disks. $p_{\psi(D)}$ is the leftmost point (client or server) in $D$.}
	\label{server-in-disk-fig}
\end{figure}

Let $c$ be the rightmost client in $p_1,\dots, p_k$. For a disk $D$ that covers $c$, let $\psi(D)\in \{1,\dots,k\}$ be the smallest index for which the point $p_{\psi(D)}$ is in the interior or on the boundary of $D$, i.e., $\psi(D)$ is the index of the leftmost point of $p_1,\dots,p_k$ that is in $D$. See Figure~\ref{server-in-disk-fig}(b). 

We claim that only one disk in an optimal solution can cover $c$, because, if two disks cover $c$ then if their centers lie on the same side of $c$, we get a contradiction to Lemma~\ref{server-in-disk-lemma}, and if their centers lie on different sides of $c$, then by removing the disk whose center is to the right of $c$ we obtain a feasible solution with smaller cost. 
Let $S^*$ be an optimal solution (with minimum sum of the radii) that has a maximum number of non-zero disks. 
Let $D^*$ be the disk in $S^*$ that covers $c$. All other clients in $p_{\psi(D^*)},\dots,p_k$ are also covered by $D^*$, and thus, they do not need to be covered by any other disk. As a consequence of Lemma~\ref{server-in-disk-lemma}, the servers that are in $D^*$ and the servers that lie to the right of $D^*$ cannot be used to cover any clients in $p_1,\dots, p_{\psi(D^*)-1}$. Therefore, if we have $D^*$, then the problem reduces to a smaller instance that consists of the points to the left of $D^*$, i.e., $p_1,\dots, p_{\psi(D^*)-1}$. See Figure~\ref{server-in-disk-fig}(b). Thus, the cost of the optimal solution for the subproblem $p_1,\dots, p_k$ can be computed as
$T(k)=T(\psi(D^*)-1) + r(D^*)$. 

In the rest of this section we compute a set $\mathcal{D}_k$ of $O(k)$ disks each of them covering $c$. Then we claim that $D^*$ belongs to $\mathcal{D}_k$. Having $\mathcal{D}_k$, we can compute $T(k)$ by the following recursion:
$$T(k)=\min\{T(\psi(D)-1) + r(D): D\in \mathcal{D}_k\}.$$ 
\changed{
Now we show how to compute $\mathcal{D}_k$. Recall from Observation~\ref{client-on-boundary-obs} that every disk in the optimal solution (including $D^*$) contains a client on its boundary. Using this observation, we compute $\mathcal{D}_k$ in two phases. In the first phase, for every server $s$ we add the disk $D(s,c)$ to $\mathcal{D}_k$.
In the second phase, for every client $c'$, with $c' \neq c$, we add a disk $D(s',c')$ to $\mathcal{D}_k$, where $s'$ is the first server to the right side of the midpoint of segment $cc'$. Since for every server and for every client (except for $c$) we add one disk to $\mathcal{D}_k$, this set has at most $k-1$ disks. The disks that we add in phase one can be computed in $O(k)$ time by sweeping the servers from right to left. The disks that we add in phase two can also be computed in $O(k)$ time by sweeping the clients from right to left, using this property that the server $s'$ associated with the next client $c'$ is on or to the left side of the server associated with the current client. Hence, the set $\mathcal{D}_k$, and consequently the entry $T(k)$, can be computed in $O(k)$ time. Therefore, our dynamic programming algorithm computes all entries of $T$ in $O(n^2)$ time.

One final issue we need to address is the correctness of our algorithm, which is to show that $D^*$ belongs to $\mathcal{D}_k$. Let $s^*$ be the server that is the center of $D^*$ and let $c^*$ be the client on the boundary of $D^*$ (such a client exists by Observation~\ref{client-on-boundary-obs}). Recall that $D^*$ covers the rightmost client $c$. If $c^*=c$, then $D^*$ has been added to $\mathcal{D}_k$ in the phase one.  Assume that $c^*\neq c$. In this case $c^*$ is the left intersection point of the
boundary of $D^*$ with $\ell$ because $c^*$ is to the left side of $c$. Let $m$ be the mid point of the line segment $c^*c$, and let $s$ be the first server to the right of $m$. The server $s^*$ cannot be to the left side of $m$ because otherwise $D^*$ could not cover $c$. Also, $s^*$ cannot be to the right side of $s$ because otherwise the disk $D(s, c^*)$, which is smaller than $D^*$, covers the same set of clients as $D^*$ does, in particular it covers $c^*$ and $c$. Therefore, we have $s^*=s$, and thus $D^*=D(s,c^*)$, which has been added to $\mathcal{D}_k$ in phase two. This finishes the proof of correctness of our algorithm. 
}

\section{Problem 3: Point-Interval Coverage with Minimum Area}
\label{Problem-3-section}
Let $I=[a,b]$ be an interval on the $x$-axis in the plane. We say that a set of disks {\em covers} $I$ if $I$ is a subset of the union of the disks in this set. \changed{Let $P=\{p_1,\dots,p_n\}$ be a set of $n$ points on $I$, that are ordered from left to right, and such that $p_1=a$ and $p_n=b$. A {\em point-interval coverage} for the pair $(P,I)$ is a set $S=\{D_1,\dots,D_n\}$ of $n$ disks that cover $I$ such that for every $i\in\{1,\dots,n\}$ the disk $D_i$ contains the point $p_i$, i.e., $p_i$ is in the interior or on the boundary of $D_i$.} See Figure~\ref{setting-fig}. The {\em point-interval coverage} problem is to find such a set of disks with minimum total area. In this section we show how to solve this problem in $O(n^2)$ time.

\begin{figure}[htb]
	\centering
	\includegraphics[width=.6\columnwidth]{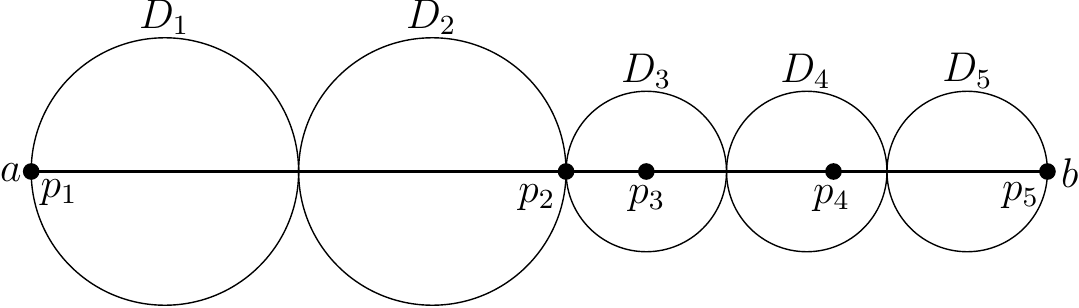}
	\caption{The minimum point-interval coverage for $(\{p_1,\dots,p_5\},[a,b])$; \changed{for every $i$, $D_i$ contains $p_i$}.}
	\label{setting-fig}
\end{figure}

\begin{theorem}
	Given $n$ points on an interval, in $O(n^2)$ time, we can find a set of disks covering
	the entire interval such that every disk contains at least one point and where the total area of the disks is minimum.
\end{theorem}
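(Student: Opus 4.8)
The plan is to normalize the disks, reduce the problem to a one–dimensional interval–partition problem, and then solve that problem by dynamic programming; I expect the passage from the straightforward $O(n^3)$ recursion to the claimed $O(n^2)$ bound to be the real work.

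First I would argue that an optimal solution may be taken to have every disk centered on the line $\ell$ that contains $I$. Identify each point with its $x$-coordinate. If a disk $D_i$ has center $(c,h)$ and radius $r_i$ and contains $p_i=(p_i,0)$, then moving the center to $(c,0)$ while keeping $r_i$ still contains $p_i$, since $(p_i-c)^2\le (p_i-c)^2+h^2\le r_i^2$, and it only enlarges the part of $\ell$ that the disk covers. Hence projecting all centers onto $\ell$ preserves feasibility and does not increase $\sum r_i^2$, so we may assume each $D_i$ is a segment $[c_i-r_i,\,c_i+r_i]$ on $\ell$ with $|c_i-p_i|\le r_i$. Note also that a radius-$0$ disk is free and covers nothing, so the burden of covering $I$ falls on the positive-radius disks.

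Next I would reduce the problem to an interval partition. By a standard exchange/trimming argument (reorder the positive disks so that their covered segments appear in the order of their assigned points, and shrink any overlap so consecutive segments only touch), there is an optimal solution whose positive disks cover pairwise-interior-disjoint consecutive sub-intervals tiling $I$, each sub-interval covered by the disk centered at its midpoint with radius equal to half its length, and each containing the distinct point charged to it; the unused points take radius $0$. Conversely any such partition yields a feasible disk family of equal cost. Since the pieces are disjoint, the ``distinct point per piece'' condition is equivalent to ``each (half-open) piece contains at least one input point,'' and there are at most $n$ pieces automatically. Thus Problem~3 is equivalent to: partition $[a,b]$ into consecutive pieces, each containing an input point, minimizing $\tfrac14\sum_k(\text{length of piece }k)^2$.

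The main obstacle is that optimal breakpoints need not lie at input points: for points at $0,3,10$ the optimum uses breakpoints $3$ and $6.5$, with cost $\tfrac14(9+3.5^2+3.5^2)$. To handle this I would prove, using convexity of $t\mapsto t^2$, a structural lemma: once the point charged to each piece is fixed, the objective is convex and the constraints $x_{k-1}\le p\le x_k$ are linear, so at optimality each breakpoint is either clamped to an input point (an \emph{anchor}) or equals the average of its two neighbors. A maximal run of the latter forces equal spacing. Hence an optimal solution chooses anchors $a=p_{i_0}<\dots<p_{i_s}=b$ and, on each segment $[p_{i_{r-1}},p_{i_r}]$, splits it into the largest feasible number $q$ of \emph{equal} pieces (each still containing a point); such a segment then contributes $(p_{i_r}-p_{i_{r-1}})^2/(4q)$, which decreases in $q$, so maximizing $q$ is locally optimal.

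This gives the recurrence
\[
\mathrm{OPT}(j)=\min_{i<j}\Big(\mathrm{OPT}(i)+\tfrac{1}{4}\,(p_j-p_i)^2/\,q^\ast(i,j)\Big),
\]
with answer $\mathrm{OPT}(n)$, where $q^\ast(i,j)$ is the maximum number of equal pieces of $[p_i,p_j]$ admitting a valid point-to-piece assignment. Computing each transition by scanning candidate piece counts or split positions, or by recomputing $q^\ast(i,j)$ from scratch, yields the simple $O(n^3)$ algorithm. The harder part, which I expect to be the crux of the ``more involved'' proof, is to shave the extra factor of $n$: one must precompute all feasibility thresholds $q^\ast(i,j)$ in $O(n^2)$ total time by exploiting monotonicity of the best split as $i$ and $j$ advance (a two-pointer/incremental sweep), so that every one of the $\Theta(n^2)$ transitions is evaluated in $O(1)$ amortized time, giving the stated $O(n^2)$ bound.
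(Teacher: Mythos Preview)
Your overall strategy matches the paper's: reduce to a one–dimensional partition problem, prove that between ``anchors'' (breakpoints at input points) the pieces are equal, and then do dynamic programming over anchors, first in $O(n^3)$ and then $O(n^2)$. The structural argument you sketch (non-overlap, then convexity forcing equal spacing between clamped breakpoints) is the content of the paper's Lemmas~\ref{non-overlap-lemma} and~\ref{equality-lemma}, just phrased as a KKT/averaging argument rather than a local-improvement one.

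Two points need tightening. First, your recurrence $\mathrm{OPT}(j)=\min_{i<j}\bigl(\mathrm{OPT}(i)+\tfrac14(p_j-p_i)^2/q^\ast(i,j)\bigr)$ silently fixes a convention for which side of an anchor the anchor point itself is charged to. The paper carries this explicitly as boolean flags $i',j'$ (so that the subproblem on $[p_i,p_j]$ may or may not own $p_i$ and $p_j$), and the DP takes the better of the two assignments at each anchor; without this, $q^\ast(i,j)$ is not well defined and the recurrence can under- or over-count points. This is easy to repair, but it is not optional.

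Second, and more substantively, your $O(n^2)$ step is only a gesture: you invoke ``monotonicity of the best split'' and a two-pointer sweep to compute all $q^\ast(i,j)$, but you do not state which quantity is monotone or why. The paper does \emph{not} compute the maximum feasible number of equal pieces; instead it checks, for each $(i,j,i',j')$, whether the equal-piece covering with \emph{exactly} $|P_{ij}|$ pieces is valid, and relies on the structural lemma to guarantee that if it is not, an internal anchor exists and is found by the outer minimization. That validity test reduces to two inequalities $m(i,j)\le d_{ij}\le M(i,j)$, where $M(i,j)$ and $m(i,j)$ are a running max and min of simple ratios; for fixed $i$ these update in $O(1)$ as $j$ increments, giving $O(n)$ per $i$ and $O(n^2)$ overall. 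Your $q^\ast$-based plan may be salvageable, but as written it is the part of the proof that is claimed rather than shown, and it is exactly the step the theorem is about.
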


If we drop the condition \changed{that $D_i$ should contain $p_i$}, then the problem can be solved in linear time by using Observation~\ref{lower-bound-obs} (which is stated below). 
First we prove some lemmas about the structural properties of an optimal point-interval coverage. We say that a disk is {\em anchored} at a point $p$ if it has $p$ on its boundary.
We say that two intersecting disks {\em touch} each other if their intersection is exactly one point, and we say that they {\em overlap} otherwise.

\begin{figure}[htb]
	\centering
	\setlength{\tabcolsep}{0in}
	$\begin{tabular}{cc}
	\multicolumn{1}{m{.5\columnwidth}}{\centering\includegraphics[width=.35\columnwidth]{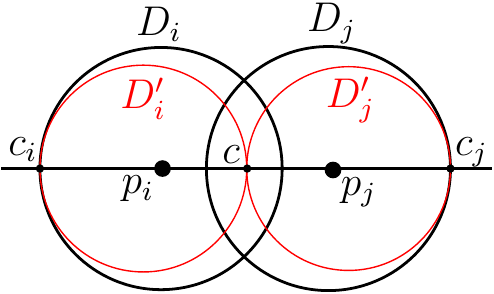}}
	&\multicolumn{1}{m{.5\columnwidth}}{\centering\includegraphics[width=.35\columnwidth]{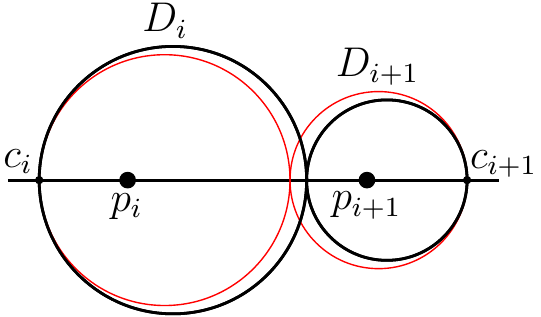}}
	\\
	(a) & (b) 
	\end{tabular}$
	\caption{Illustrations of the proofs of (a) Lemma~\ref{non-overlap-lemma}, and (b) Lemma~\ref{equality-lemma}.}
	\label{touching-fig}
\end{figure}

\begin{lemma}
	\label{non-overlap-lemma}
	There is no pair of overlapping disks in any optimal solution for the point-interval coverage problem. 
\end{lemma}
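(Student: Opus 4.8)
The plan is an exchange argument. Suppose an optimal point-interval coverage contained two overlapping disks; I would rebuild a feasible coverage of strictly smaller total area, contradicting optimality. As a preliminary normalization I would argue that every disk in an optimal solution may be assumed centered on $\ell$: the disk whose diameter is the chord a given disk cuts on $\ell$ covers exactly the same portion of $I$, still contains that disk's (on-$\ell$) designated point, and has no larger area. After this normalization each disk is described by its \emph{footprint}, the sub-interval of $\ell$ it covers, and two disks overlap if and only if their footprints overlap (share more than one point). Hence it suffices to rule out overlapping footprints.

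So assume two footprints $[a_1,b_1]$ and $[a_2,b_2]$ overlap, with $a_1\le a_2$ (so overlap means $a_2<b_1$), and let $p\in[a_1,b_1]$ and $q\in[a_2,b_2]$ be the two points these disks are required to contain. Leaving all other disks untouched, I would replace this pair by two disks that \emph{touch}, covering exactly the union footprint $[a_1,R]$ with $R=\max(b_1,b_2)$; since this union is unchanged, all of $I$ stays covered. Writing the touching point as $t$, the two new footprints are $[a_1,t]$ and $[t,R]$ with combined area proportional to $(t-a_1)^2+(R-t)^2$, to be compared against the old $(b_1-a_1)^2+(b_2-a_2)^2$. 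It then remains to choose $t$ so that the area strictly drops and each new disk still contains one of $p,q$.

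I would handle two cases. If one footprint contains the other (which, given $a_1\le a_2$, happens exactly when $a_1=a_2$ or $b_1\ge b_2$), the inner disk is redundant for coverage, so I would simply shrink it to the zero-radius disk at its own designated point; coverage is preserved by the outer disk, the point stays contained, and the area strictly decreases. In the remaining \emph{staircase} case $a_1<a_2<b_1<b_2$ I would keep the union split with $R=b_2$: a one-line computation gives a strict decrease at both ends $t=a_2$ and $t=b_1$ of the overlap region, and since $(t-a_1)^2+(R-t)^2$ is convex in $t$, the decrease is strict for every $t\in[a_2,b_1]$.

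The step I expect to be the real obstacle is keeping both points contained, a requirement which a naive shrink violates exactly when $p$ and $q$ both sit in the overlap region. The remedy is to note that the left disk $[a_1,t]$ contains $\min(p,q)$ whenever $t\ge\min(p,q)$ and the right disk $[t,R]$ contains $\max(p,q)$ whenever $t\le\max(p,q)$; assigning each new disk to the point on its side, any $t\in[\min(p,q),\max(p,q)]$ is admissible. Since $\min(p,q)\le p\le b_1$ and $\max(p,q)\ge q\ge a_2$, this interval meets the area-decreasing range $[a_2,b_1]$, so I can take $t$ in the nonempty intersection. That $t$ yields a strictly cheaper feasible coverage, the desired contradiction.
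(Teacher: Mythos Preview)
Your argument is correct and follows the same exchange idea as the paper: replace an overlapping pair by two touching disks whose footprints partition the union footprint, strictly lowering the total area while keeping each designated point covered. The paper's execution is a bit more direct: it picks the touching point $c$ to lie in $D_i\cap D_j$ on the segment $p_ip_j$ (which, once disks are centered on $\ell$, always exists and lies in the overlap of the two footprints), so that $p_i\in D'_i$ and $p_j\in D'_j$ hold automatically and no containment/staircase case split is required.
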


\begin{proof}
	\changed{Our proof is by contradiction. Consider two overlapping disks $D_i$ and $D_j$, with $i<j$, in an optimal solution. Since $D_i$ contains $p_i$ and $D_j$ contains $p_j$, there exists a point $c$ on the line segment $p_ip_j$ that is in $D_i\cap D_j$. Let $c_i$ and $c_j$ be the leftmost and the rightmost points of the interval that is covered by $D_i\cup D_j$; see Figure~\ref{touching-fig}(a). Let $D'_i$ and $D'_j$ be the disks with diameters $c_ic$ and $cc_j$, respectively. The areas of $D'_i$ and $D'_j$ are smaller than the areas of $D_i$ and $D_j$, respectively. Moreover, $D'_i$ contains $p_i$, $D'_j$ contains $p_j$, and $D'_i\cup D'_j$ covers the same interval $[c_i,c_j]$ as $D_i\cup D_j$ does. Therefore, by replacing $D_i$ and $D_j$ with $D'_i$ and $D'_j$ we obtain a solution whose total area is smaller than the optimal area, which is a contradiction.}
\end{proof}

\begin{lemma}
	\label{equality-lemma}
	In any optimal solution, if the intersection point of $D_i$ and $D_{i+1}$ does not belong to $P$, then $D_i$ and $D_{i+1}$ have equal radius. 
\end{lemma}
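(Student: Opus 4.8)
The plan is to reduce the statement to a one-variable convexity argument about the location of the touching point. First I would record the canonical shape of an optimal solution. By Lemma~\ref{non-overlap-lemma} no two disks overlap, so consecutive disks $D_i$ and $D_{i+1}$ meet in exactly one point and the sub-intervals of $I$ they cover are adjacent rather than overlapping. Writing $[t_{i-1},t_i]$ for the maximal sub-interval of $\ell$ covered by $D_i$ (with $t_0=a$ and $t_n=b$), I would argue that in an optimal solution each $D_i$ is centred on $\ell$ and is exactly the disk of diameter $[t_{i-1},t_i]$: the minimum-area disk covering the chord $[t_{i-1},t_i]$ is the one with that diameter, and since $p_i$ lies on $\ell$ inside $D_i$ we have $p_i\in[t_{i-1},t_i]$, so replacing $D_i$ by this diameter-disk keeps $p_i$ covered and cannot increase the area. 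Optimality then forces equality, so $r_i=(t_i-t_{i-1})/2$ and the single intersection point of $D_i$ and $D_{i+1}$ is the shared endpoint $t_i$ on $\ell$.

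Next I would pin down where $t_i$ can lie. Because $p_i\in[t_{i-1},t_i]$ and $p_{i+1}\in[t_i,t_{i+1}]$, the touching point satisfies $p_i\le t_i\le p_{i+1}$; and since no input point lies strictly between $p_i$ and $p_{i+1}$, the hypothesis $t_i\notin P$ is equivalent to the strict inequalities $p_i<t_i<p_{i+1}$.

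The heart of the argument is a local perturbation that moves only $t_i$, leaving $t_{i-1}$, $t_{i+1}$ and every other disk untouched. Moving $t_i$ keeps the solution feasible exactly while $t_i$ stays in $[p_i,p_{i+1}]$ (so that $p_i$ remains in $D_i$ and $p_{i+1}$ remains in $D_{i+1}$), and the only part of the total area that changes is, up to the factor $\pi/4$, the function $f(t_i)=(t_i-t_{i-1})^2+(t_{i+1}-t_i)^2$. This $f$ is a convex parabola whose unconstrained minimiser is the midpoint $(t_{i-1}+t_{i+1})/2$, at which $t_i-t_{i-1}=t_{i+1}-t_i$, i.e. $r_i=r_{i+1}$. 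Optimality of the solution means $t_i$ minimises $f$ over the feasible interval $[p_i,p_{i+1}]$, and the minimiser of a convex function over an interval is either an endpoint or the unconstrained minimum. The endpoints are $p_i$ and $p_{i+1}$, both in $P$, so if $t_i\notin P$ the minimiser must be the interior one, the midpoint, giving $r_i=r_{i+1}$.

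I expect the only delicate point to be the first paragraph: justifying that an optimal solution may be taken with every disk centred on $\ell$ and with diameter equal to its covered chord, so that the single-variable perturbation in the last paragraph genuinely keeps all other disks fixed and feasible. Once that canonical form is in place the remainder is routine one-dimensional convexity, and the contrapositive reading ($r_i\neq r_{i+1}$ forces $t_i$ to be pinned at $p_i$ or $p_{i+1}$) matches the statement exactly.
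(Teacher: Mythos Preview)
Your argument is correct and rests on the same local perturbation as the paper: both proofs slide the touching point $t_i$ between $D_i$ and $D_{i+1}$ while keeping the outer endpoints $t_{i-1},t_{i+1}$ fixed, and observe that the resulting area is strictly smaller unless $r_i=r_{i+1}$. The paper phrases this as a direct contradiction (assume $r_{i+1}<r_i$, shrink $D_i$ anchored at its left endpoint and enlarge $D_{i+1}$ anchored at its right endpoint by a small $\epsilon$, note the net area change is negative), whereas you package the same computation as minimising the convex quadratic $f(t_i)=(t_i-t_{i-1})^2+(t_{i+1}-t_i)^2$ over $[p_i,p_{i+1}]$ and reading off that an interior minimiser forces equal radii. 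The one place you add genuine content is your first paragraph: you explicitly justify that an optimal $D_i$ may be taken to be the diameter-disk on its covered chord (hence centred on $\ell$), which is needed to make ``shrink/enlarge while anchored'' well-defined and to guarantee the perturbed disks still contain $p_i$ and $p_{i+1}$. The paper's proof uses this canonical form implicitly without stating it, so your version is a bit more careful, but the underlying mechanism is identical.
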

\begin{proof}
	Let $c$ be the intersection point of $D_i$ and $D_{i+1}$. Let $c_i$ be the left intersection point of the boundary of $D_i$ with the $x$-axis, and let $c_{i+1}$ be the right intersection point of the boundary of $D_{i+1}$ with the $x$-axis; see Figure~\ref{touching-fig}(b). We proceed by contradiction, and assume, without loss of generality, that $D_{i+1}$ is smaller than $D_i$. We shrink $D_i$ (while anchored at $c_i$) and enlarge $D_{i+1}$ (while anchored at $c_{i+1}$) simultaneously by a small value. This gives a valid solution whose total area is smaller than the optimal area, because our gain in the area of $D_{i+1}$ is smaller than our loss from the area of $D_i$. This contradicts the optimality of our initial solution.
\end{proof}

The following lemma and observation play important roles in our algorithm for the point-interval coverage problem, which we describe later.

\begin{lemma}
	\label{equity-lemma}
	Let $R>0$ be a real number, and $r_1,r_2,\ldots,r_k$ be a sequence of positive real numbers such that 
	$\sum_{i=1}^k r_i = R$. Then 
	\begin{equation}  \label{eq1} 
	\sum_{i=1}^k r_i^2 \geq \sum_{i=1}^k (R/k)^2 = R^2/k , 
	\end{equation} 
	i.e., the sum on the left-hand side of \eqref{eq1} is minimum if all 
	$r_i$ are equal to $R/k$. 
\end{lemma}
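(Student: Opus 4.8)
The plan is to prove Lemma~\ref{equity-lemma} as a clean, standalone inequality about sequences of positive reals with a fixed sum, using convexity of the squaring function. The statement is exactly the assertion that, subject to the linear constraint $\sum_{i=1}^k r_i = R$, the sum of squares $\sum_{i=1}^k r_i^2$ is minimized when all the $r_i$ are equal. I would present the convexity-based argument first because it is the shortest and most transparent, and then note an elementary alternative that avoids invoking Jensen's inequality by name.

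\textbf{Approach via convexity (Jensen / power-mean).} First I would observe that $f(x)=x^2$ is convex on $\mathbb{R}$. By Jensen's inequality applied to the $k$ values $r_1,\dots,r_k$ with equal weights $1/k$,
\begin{equation*}
f\!\left(\frac{1}{k}\sum_{i=1}^k r_i\right) \leq \frac{1}{k}\sum_{i=1}^k f(r_i),
\end{equation*}
that is, $(R/k)^2 \leq \frac{1}{k}\sum_{i=1}^k r_i^2$. Multiplying through by $k$ gives $R^2/k \leq \sum_{i=1}^k r_i^2$, which is exactly inequality~\eqref{eq1}. The final equality $\sum_{i=1}^k (R/k)^2 = k\cdot(R/k)^2 = R^2/k$ is immediate, so substituting $r_i=R/k$ for all $i$ meets the bound with equality and shows the bound is tight.

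\textbf{Elementary alternative (Cauchy--Schwarz or variance).} If I prefer not to cite Jensen, I would use Cauchy--Schwarz: applying it to the vectors $(r_1,\dots,r_k)$ and $(1,\dots,1)$ yields $\left(\sum_{i=1}^k r_i\right)^2 \leq k\sum_{i=1}^k r_i^2$, i.e.\ $R^2 \leq k\sum_{i=1}^k r_i^2$, which rearranges to the claim. Equivalently, one can argue directly from the nonnegativity of a sum of squares by writing $\sum_{i=1}^k (r_i - R/k)^2 \geq 0$ and expanding; using $\sum r_i = R$ the cross term simplifies to $-2(R/k)\sum r_i + k(R/k)^2 = -R^2/k$, so $\sum r_i^2 - R^2/k \geq 0$. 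This also makes the equality condition fully explicit: equality holds if and only if each $r_i - R/k = 0$, i.e.\ all $r_i$ equal $R/k$.

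There is no real obstacle here; the only thing to be careful about is stating the equality case correctly (it characterizes when the minimum is attained) and being clear that the hypotheses ($R>0$, all $r_i>0$, $\sum r_i = R$) are used only to guarantee the constraint is consistent, since the inequality itself holds for all real $r_i$. I would lead with the variance-expansion proof, as it is self-contained, needs no named theorem, and simultaneously delivers both the inequality and its tightness in two lines.
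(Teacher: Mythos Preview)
Your proposal is correct, and your primary argument via Jensen's inequality applied to $f(x)=x^2$ is exactly the approach the paper takes. The Cauchy--Schwarz and variance-expansion alternatives you add are valid self-contained routes that the paper does not include; they have the minor advantage of making the equality case explicit, which the paper's Jensen argument omits.
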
 
\begin{proof}
	If $f$ is a convex function, then\textemdash by Jensen’s inequality\textemdash we have 
	$$f \left( \sum_{i=1}^k \frac{r_i}{k} \right) \leqslant
	\sum_{i=1}^k \frac{f(r_i)}{k} . 
	$$  
	Since the function $f(x) = x^2$ is convex, 
	it follows that 
	$$\left( \frac{R}{k} \right)^2 = f \left( \frac{R}{k} \right)
	= f \left( \sum_{i=1}^k \frac{r_i}{k} \right)\leqslant \sum_{i=1}^k \frac{r_i^2}{k}, 
	$$
	which, in turn, implies Inequality~\eqref{eq1}.
\end{proof} 

The minimum sum of the radii of a set of disks that cover $I=[a,b]$ is $|ab|/2$. The following observation is implied by Lemma~\ref{equity-lemma}, by setting $R=|ab|/2$ and $k=n$. 

\begin{observation}
	\label{lower-bound-obs}
	The minimum total area of $n$ disks covering $I$ is obtained by a sequence of $n$ disks of equal radius such that every two consecutive disks touch each other; see Figure~\ref{unity-fig}.
\end{observation}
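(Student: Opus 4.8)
The plan is to establish matching lower and upper bounds on the total area $\sum_{i=1}^n r_i^2$ over all choices of $n$ disks covering $I=[a,b]$, and then to check that the equal-radius touching configuration realizes the lower bound. The two ingredients I would combine are the fact (stated just above the observation) that any covering of $I$ satisfies $\sum_i r_i \geq \dist{a}{b}/2$, together with Lemma~\ref{equity-lemma}, which pins down the minimum of $\sum r_i^2$ once the sum $\sum r_i$ is fixed. This is precisely the specialization suggested by taking $R=\dist{a}{b}/2$ and $k=n$ in that lemma.

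For the lower bound, let $D_1,\dots,D_n$ be any $n$ disks covering $I$, with radii $r_1,\dots,r_n$. Each $D_i$ meets the $x$-axis in an interval of length at most $2r_i$, so the union of the disks covers at most $\sum_i 2r_i$ units of $I$; since the disks cover all of $I$, this forces $R:=\sum_i r_i \geq \dist{a}{b}/2$. Applying Lemma~\ref{equity-lemma} with this $R$ and $k=n$ gives $\sum_i r_i^2 \geq R^2/n$. Because $t \mapsto t^2/n$ is increasing for $t\geq 0$ and $R \geq \dist{a}{b}/2$, I then obtain $\sum_i r_i^2 \geq (\dist{a}{b}/2)^2/n = \dist{a}{b}^2/(4n)$, a bound valid for every feasible covering.

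For the upper bound I would exhibit the candidate configuration explicitly: take $n$ disks of common radius $r = \dist{a}{b}/(2n)$ whose centers are equally spaced at $a+(2i-1)r$, so that consecutive centers are at distance $2r$ and the corresponding disks touch in exactly one point on the $x$-axis. Their $x$-projections tile $[a,b]$ with no gaps, so $I$ is covered, and the total area is $n\cdot r^2 = \dist{a}{b}^2/(4n)$, matching the lower bound. This is exactly the equality case of the Jensen step in Lemma~\ref{equity-lemma} (all $r_i$ equal) combined with the minimal admissible sum $R=\dist{a}{b}/2$, so this configuration is optimal.

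There is no substantial obstacle here; the only points requiring care are (i) that the coverage hypothesis enters the lower bound solely through the necessary condition $\sum_i r_i \geq \dist{a}{b}/2$ rather than through the full geometric covering constraint, and (ii) verifying that the equal-radius touching disks genuinely cover $I$, so that the lower and upper bounds meet at $\dist{a}{b}^2/(4n)$.
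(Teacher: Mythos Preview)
Your proposal is correct and follows essentially the same approach as the paper: the paper simply states that the observation is implied by Lemma~\ref{equity-lemma} with $R=\dist{a}{b}/2$ and $k=n$, together with the preceding remark that any covering satisfies $\sum_i r_i\geqslant \dist{a}{b}/2$. You have spelled out this implication in full, including the monotonicity step and the explicit verification that the equal-radius touching configuration achieves the bound, which the paper leaves implicit.
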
  

\begin{figure}[htb]
	\centering
	\includegraphics[width=.7\columnwidth]{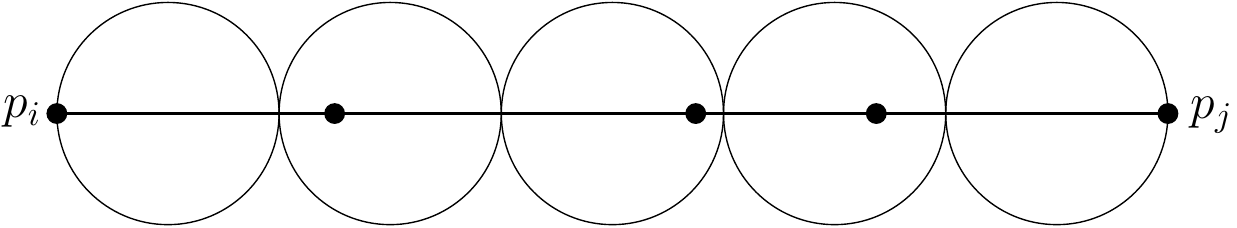}
	\caption{A valid unit-disk covering.}
	\label{unity-fig}
\end{figure}

We refer to the covering of $I$ that is introduced in Observation~\ref{lower-bound-obs} as the {\em unit-disk covering} of $I$ with $n$ disks. Such a covering is called {\em valid} if it is a point-interval coverage for $(P,I)$. 

\subsection{A Dynamic-Programming Algorithm}
\label{dp-subsection}
In this subsection we present an $O(n^3)$-time dynamic-programming algorithm for the point-interval coverage problem. In Subsection~\ref{improved-subsection} we show how to improve the running time to $O(n^2)$.

First, we review some properties of an optimal solution for the point-interval coverage problem that enable us to present a top-down dynamic programming algorithm.
Let $C^*=D_1,\dots,D_n$ be the sequence of $n$ disks in an optimal solution for this problem. Recall that as a consequence of Lemma~\ref{non-overlap-lemma}, the intersection of every two consecutive disks in $C^*$ is a point. If there is no $k\in\{1,\dots, n-1\}$ for which the intersection point of $D_k$ and $D_{k+1}$ belongs to $P$, then Lemma~\ref{equality-lemma} implies that all disks in $C^*$ have equal radius, and thus, $C^*$ is a valid unit-disk covering. Assume that for some $k\in\{1,\dots, n-1\}$ the intersection point of $D_k$ and $D_{k+1}$ is a point $p\in P$. Notice that either $p=p_k$ or $p=p_{k+1}$. In either case, $C^*$ is the union of the optimal solutions for two smaller problem-instances $(P_1,I_1)$ and $(P_2,I_2)$ where $I_1=[a,p]$, $I_2=[p,b]$, $P_1=\{p_1,\dots,p_k\}$ and $P_2=\{p_{k+1},\dots,p_n\}$.

\begin{figure}[htb]
	\centering
	\includegraphics[width=.73\columnwidth]{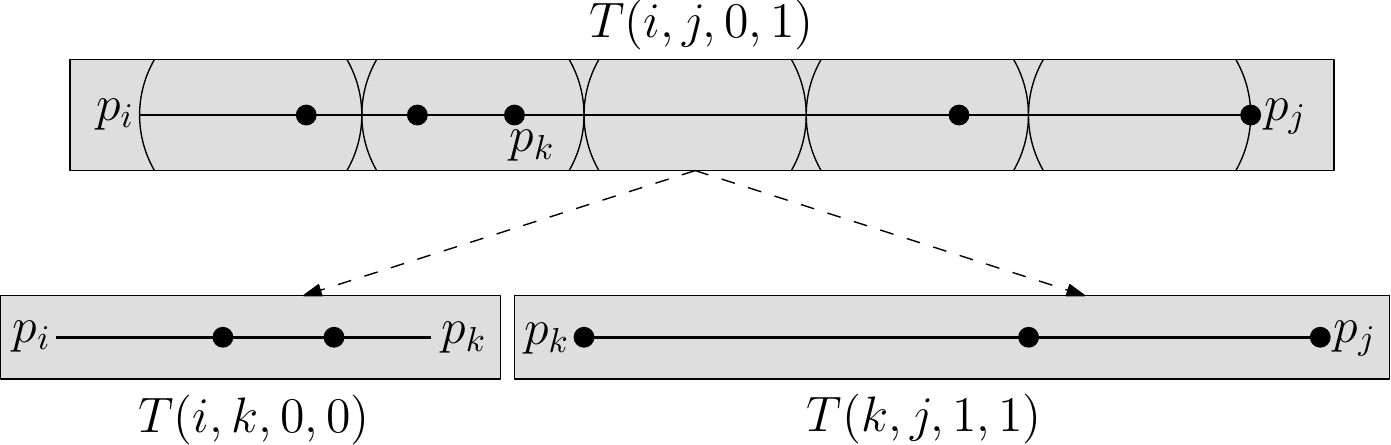}
	\caption{An instance for which the unit-disk covering (shown on the top interval) is not valid.}
	\label{split-fig}
\end{figure}

We define a subproblem $(P_{ij},I_{ij})$ and represent it by four indices $(i,j,i',j')$ where $1\leqslant i<j\leqslant n$ and $i',j'\in\{0,1\}$. The indices $i$ and $j$ indicate that $I_{ij} =[p_i,p_j]$. The set $P_{ij}$ contains the points of $P$ that are on $I_{ij}$ provided that $p_i$ belongs to $P_{ij}$ if and only if $i'=1$ and $p_j$ belongs to $P_{ij}$ if and only if $j'=1$. For example, if $i'=1$ and $j'=0$, then $P_{ij}=\{p_i,p_{i+1},\dots, p_{j-1}\}$. We define $T(i,j,i',j')$ to be the cost (total area) of an optimal solution for subproblem $(i,j,i',j')$. The optimal cost of the original problem will be stored in $T(1,n,1,1)$. We compute $T(i,j,i',j')$ as follows. If the unit-disk covering is a valid solution for $(i,j,i',j')$, then by Observation~\ref{lower-bound-obs} it is optimal, and thus we assign its total area to $T(i,j,i',j')$. Otherwise, as we discussed earlier, there is a point $p_k$ of $P$ with $k\in\{i+1,\dots, j-1\}$ that is the intersection point of two consecutive disks in the optimal solution. This splits the problem into two smaller subproblems, one to the left of $p_k$ and one to the right of $p_k$. The point $p_k$ is assigned either to the left subproblem or to the right subproblem. See Figure~\ref{split-fig} for an instance in which the unit-disk covering is not valid, and $p_k$ is assigned to the right subproblem. In the optimal solution, $p_k$ is assigned to the one that minimizes the total area, which is 
$$T(i,j,i',j')=\min\{T(i,k,i',1)+T(k,j,0,j'), T(i,k,i',0)+T(k,j,1,j')\}.$$
Since we do not know the value of $k$, we try all possible values and pick the one that minimizes $T(i,j,i',j')$.

There are three base cases for the above recursion. (1) No point of $P$ is assigned to the current subproblem: we assign $+\infty$ to $T(\cdot)$, which implies this solution is not valid. (2) Exactly one point of $P$ is assigned to the current subproblem: we cover $[p_i,p_j]$ with one disk of diameter $|p_ip_j|$ and assign its area to $T(\cdot)$. (3) More than one point of $P$ is assigned to the current subproblem and the unit-disk covering is valid: we assign the total area of this unit-disk covering to $T(\cdot)$.

The total number of subproblems is at most $2\cdot 2\cdot {n \choose 2} = O(n^2)$, because $i$ and $j$ take ${n \choose 2}$ different values, and each of $i'$ and $j'$ takes two different values. 
The time to solve each subproblem $(i,j,i',j')$ is proportional to the time for checking the validity of the unit-disk covering for this subproblem plus the iteration of $k$ from $i+1$ to $j-1$; these can be done in total time $O(j-i)$. Thus, the running time of our dynamic programming algorithm is $O(n^3)$. 

In the next section we present a more involved dynamic-programming algorithm that improves the running time to $O(n^2)$. Essentially, our algorithm verifies the validity of the unit-disk coverings for all subproblems $p_i,\dots,p_j$ in $O(n^2)$ time.

\subsection{Improving the running time}
\label{improved-subsection}
We describe a top-down dynamic programming algorithm that maintains a table $T$ with $2n$ entries $T(j,j')$ where $j\in\{1,\dots, n\}$ and $j'\in\{0,1\}$. Each entry $T(j, j')$ represents the cost of an optimal solution for the subproblem that consists of interval $I_j=[p_1,p_j]$ and a point set $P_j$. 
If $j' = 1$, then $P_j = I_j \cap P$, whereas, if $j' = 0$, then $P_j = I_j \cap P \setminus \{ p_j \}$. 
The optimal cost of the original problem will be stored in $T(n, 1)$; the optimal solution itself can be recovered from $T$. In the rest of this section we show how to solve subproblem $(j,j')$. If the unit-disk covering is a valid solution for $(j,j')$, then we assign its total area to $T(j,j')$. Otherwise, there must be some point $p_k\in P$ with $k\in\{2,\dots, j-1\}$ that is the intersection point of two consecutive disks in the optimal solution. Let $i$ be the largest such $k$. This choice of $i$ implies that the interval $[p_i,p_j]$ is covered by unit disks, and thus, we only need to solve the subproblem to the left of $p_i$ optimally for two cases where $i'=0$ and $i'=1$. Let $U(i,j,i'j')$ denote the cost of a unit-disk covering for the problem instance $(i,j,i',j')$ (that is defined in the previous section). Then

$$T(j,j')=\min\{T(i,1)+U(i,j,0,j'), T(i,0)+U(i,j,1,j')\}.$$

Since we do not know the value of $i$, we try all possible values and pick the one
that minimizes $T(j,j')$. 

The total number of subproblems is $2n$, and the time to solve each subproblem $(j,j')$ is proportional to the total time for the iterations of $i$ from $2$ to $j-1$ plus the time for computing the unit-disk covering for $(i,j,i',j')$ and checking its validity. Let $u(j)$ denote the time for computing and checking validity of unit-disk coverings for all $i$. Then the time to compute $T(j,j')$ is $O(j)+u(j)$. Therefore, the running time of our algorithm, i.e. the time to compute $T(n,1)$, is $$\sum_{j=1}^n{O(j)+u(j)}=O(n^2)+\sum_{j=1}^n{u(j)}=O(n^2)+\sum_{j=1}^n\sum_{i=2}^{j-1}{u(i,j,i',j')},$$
where $u(i,j,i',j')$ denotes the time of computing the unit-disk covering for $(i,j,i',j')$ and checking its validity. In the rest of this section we will show how to do this, for all $(i,j,i',j')$, in $O(n^2)$ time. This implies that the total running time of our algorithm is $O(n^2)$.

\begin{figure}[htb]
	\centering
	\includegraphics[width=.7\columnwidth]{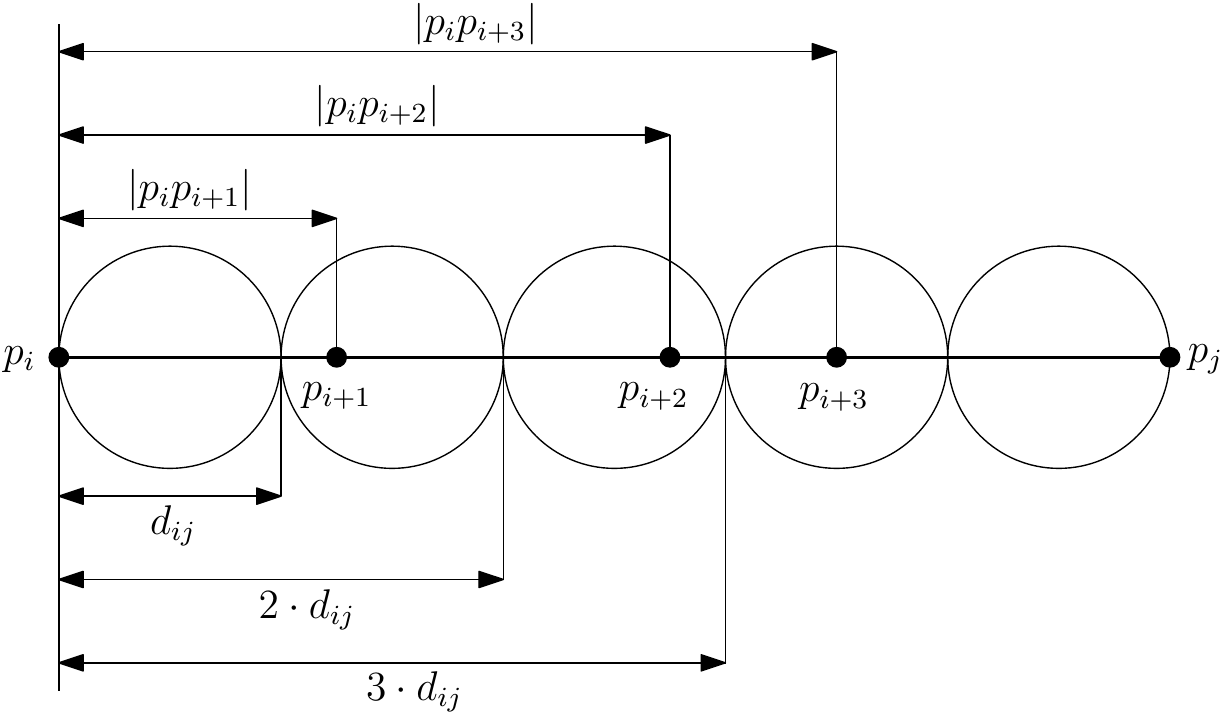}
	\caption{The validity of the unit-disk covering for $(i,j,1,1)$.}
	\label{unit-disk-validity-fig}
\end{figure}

Take any $i\in\{1,\dots, n-1\}$. We show how to check the validity of the unit-disk covering for $(i,j,i',j')$, where $j$ iterates from $i+1$ to $n$. We are going to show how to do this in $O(n-i)$ total time, for all values of $j$. This  will imply that we can check the validity of the unit-disk coverings for all $i$ and $j$ in $O(n^2)$ time. We describe the procedure for the case when $i'=1$ and $j'=1$; the other three cases can be handled similarly. Recall that $I_{ij}=[p_i,p_j]$ is the interval and $P_{ij}=\{p_i,\dots,p_j\}$ is the point set that are associated with $(i,j,i',j')$. Notice that the length of $I_{ij}$ is $|p_ip_j|$, and the number of points in $P_{ij}$ is $n_{ij}=j-i+1$.
Then the diameter of each disk in the unit-disk covering of $(i,j,1,1)$ is $d_{ij}=|p_ip_j|/n_{ij}$.
In order to have a valid unit-disk covering for $(i,j,1,1)$, the following conditions are necessary and sufficient (see Figure~\ref{unit-disk-validity-fig})

\begin{align}
\notag & |p_ip_{i+1}|~~\leqslant d_{ij}\\
\notag d_{ij}\leqslant~~  & |p_ip_{i+2}|~~\leqslant 2\cdot d_{ij}\\
\notag 2\cdot d_{ij}\leqslant~~  & |p_ip_{i+3}|~~\leqslant 3\cdot d_{ij}\\
\notag 3\cdot d_{ij}\leqslant~~  & |p_ip_{i+4}|~~\leqslant 4\cdot d_{ij}\\
\notag &~~~~ \vdots\\
\notag  (n_{ij}-1)\cdot d_{ij}\leqslant~~  &~ |p_ip_{j}|~~~\leqslant n_{ij}\cdot d_{ij}.
\end{align}
The above inequalities are equivalent to the following two inequalities
\begin{align}
\label{eq3} d_{ij}\geqslant&\max\left\{|p_ip_{i+1}|,\frac{|p_ip_{i+2}|}{2},\frac{|p_ip_{i+3}|}{3},\dots, \frac{|p_ip_j|}{n_{ij}}\right\},\\
\label{eq4}d_{ij}\leqslant&\min\left\{|p_ip_{i+2}|,\frac{|p_ip_{i+3}|}{2},\frac{|p_ip_{i+4}|}{3},\dots, \frac{|p_ip_j|}{n_{ij}-1}\right\}.
\end{align} 

Let $M(i,j)$ denote the maximum value in Inequality~\eqref{eq3}, and let $m(i,j)$ denote the minimum value in Inequality~\eqref{eq4}. To check the validity of the unit-disk covering for $(i,j,1,1)$, it suffices to compare $d_{ij}$ with these two values. Recall that $i$ is fixed and $j$ iterates from $i+1$ to $n$. Now we show how to check the validity of the unit-disk covering for subproblem $(i,j+1,1,1)$, in  $O(1)$  time. In this subproblem, the diameter of the unit disks is $d_{i(j+1)}$ and the number of points is $n_{i(j+1)}=n_{ij}+1$; these values can be computed in  $O(1)$  time. In order to have a valid unit-disk covering for $(i,j+1,1,1)$, the following two inequalities are necessary and sufficient
\begin{align}
\notag d_{i(j+1)}\geqslant&\max\left\{|p_ip_{i+1}|,\frac{|p_ip_{i+2}|}{2},\frac{|p_ip_{i+3}|}{3},\dots, \frac{|p_ip_j|}{n_{ij}},\frac{|p_ip_{j+1}|}{n_{ij}+1}\right\},\\
\notag d_{i(j+1)}\leqslant&\min\left\{|p_ip_{i+2}|,\frac{|p_ip_{i+3}|}{2},\frac{|p_ip_{i+4}|}{3},\dots, \frac{|p_ip_j|}{n_{ij}-1}, \frac{|p_ip_{j+1}|}{n_{ij}}\right\}.
\end{align}

Thus, we can compute 
$$
\notag M(i,j+1)=\max\left\{M(i,j), \frac{|p_ip_{j+1}|}{n_{ij}+1}\right\}\text{,~~~~and~~~~}
\notag m(i,j+1)=\min\left\{m(i,j), \frac{|p_ip_{j+1}|}{n_{ij}}\right\},
$$

in  $O(1)$  time. Then the unit-disk covering is valid for $(i,j+1,1,1)$ if and only if 
$m(i,j+1)\leqslant d_{i(j+1)}\leqslant M(i,j+1)$; this can be verified in  $O(1)$  time. Thus, by keeping $M(i,j)$ and $m(i,j)$ from the previous iteration, we can check the validity of the unit-disk covering for the current iteration, in  $O(1)$  time. Therefore, we can check the validity of $(i,j,1,1)$ for all $j\in\{i+1,\dots,n\}$ in $O(n-i)$ total time. This finishes the proof.

\section{Conclusion: An Open Problem}
We considered three optimization problems on collinear points in the plane. Here we present a related open problem: given a set of collinear points, we want to assign to each point a disk, centered at that point, such that the underlying disk graph is connected and the sum of the areas of the disks is minimized. The disk graph has input points as its vertices, and has an edge between two points if their assigned disks intersect. It is not known whether or not this problem is NP-hard. In any dimension $d\geqslant 2$ this problem is NP-hard if an upper bound on the radii of disks is given to us \cite{Chambers2011}.

\bibliographystyle{abbrv}
\bibliography{Collinear-Points.bib}
\end{document}